\documentclass{article}
\usepackage{ijcai26}
\usepackage{float}

\usepackage{times}
\usepackage{soul}
\usepackage{url}
\usepackage[hidelinks]{hyperref}
\usepackage[utf8]{inputenc}
\usepackage[small]{caption}
\usepackage{graphicx}
\usepackage{amsmath}
\usepackage{amsthm}
\usepackage{booktabs}
\usepackage{algorithm}
\usepackage{algorithmic}
\usepackage[switch]{lineno}

\usepackage{subcaption}
\usepackage{url}
\usepackage{graphicx}

\usepackage{amssymb,amsmath,amsfonts}
\usepackage{mathtools}
\usepackage{ifthen,version}
\usepackage{tikz}
\usepackage{todonotes}
\usetikzlibrary{automata,positioning,decorations.markings,arrows,intersections,calc,shapes,patterns}
\usepackage{cleveref}
\usepackage{thm-restate}

\pgfmathsetmacro{\arrLen}{1}        
\pgfmathsetlengthmacro{\HorizVarDistnc}{2cm}

\usepackage{pgfplots}
\pgfplotsset{compat=1.16}

\usepackage{nameref}

\usepackage{pgfplots, pgfplotstable}
\pgfplotsset{compat=1.9}

\usepackage{amsthm}
\newtheorem{example}{Example}
\newtheorem{theorem}{Theorem}
\newtheorem{definition}{Definition}

\newtheorem{lemma}{Lemma}
\newtheorem{remark}{Remark}
\newtheorem{observation}{Observation}
\newtheorem{assumption}{Assumption}

\usepackage{algorithm}
\usepackage{algorithmic}

\colorlet{darkgreen}{green!40!black}
\colorlet{darkblue}{blue!60!black}
\colorlet{darkred}{red!50!black}
\colorlet{safecellcolor}{yellow!5}
\colorlet{goodcellcolor}{green!10}
\colorlet{badcellcolor}{blue!10}

\PassOptionsToPackage{usenames,dvipsnames}{xcolor}

\definecolor{dkgreen}{rgb}{0,0.6,0}
\definecolor{gray}{rgb}{0.5,0.5,0.5}
\definecolor{mauve}{rgb}{0.58,0,0.82}

\tikzset{
  >=latex,node distance=2cm,on grid,auto, initial text=,
  box state/.style={draw,rectangle,minimum size=8mm,rounded corners},
  prob state/.style={draw,very thick,shape=circle,darkblue,minimum size=3mm,inner sep=0mm},
  every loop/.style={shorten >=0pt},
  accepting state/.style={double distance=1.2pt, outer sep = 0.6pt+\pgflinewidth},
  accepting dot/.style={above=-2.5pt,circle,fill,darkgreen,inner sep=2pt,radius=1pt},
  loop above/.append style={every loop/.append style={out=120, in=60, looseness=6}},
  loop below/.append style={every loop/.append style={out=300, in=240, looseness=6}},
  loop left/.append style={every loop/.append style={out=210, in=150, looseness=6}},
  loop right/.append style={every loop/.append style={out=30, in=330, looseness=6}}
}

\newcommand{\AAND}{\mathsf{AND}}
\newcommand{\OOR}{\mathsf{OR}}

\newcommand{\mdp}{\mathcal M}
\newcommand{\sw}{\mathsf{SW}}
\newcommand{\pto}{\xrightharpoondown{}}
\newcommand{\set}[1]{\left\{ #1 \right\}}

\newcommand{\seq}[1]{\langle #1 \rangle}
\newcommand{\FRuns}{\mathsf{FRuns}}
\newcommand{\Runs}{\mathsf{Runs}}

\newcommand{\EDisct}{\mathcal{D}}

\newcommand{\run}{\rho}

\newcommand{\Ff}{\mathcal{F}}

\newcommand{\eE}{\mathbb E}

\newcommand{\Real}{\mathbb R}

\newcommand{\DIST}{{\cal D}}

\DeclareMathOperator{\supp}{\mathit{supp}}
\DeclareMathOperator{\last}{\mathsf{last}}

\iftrue
\newcommand{\soum}[1]{\todo[fancyline,size=footnotesize,color=cyan!20,caption={SP}]{\textbf{SP:} #1}}
\newcommand{\soumline}[1]{\todo[inline,size=footnotesize,color=cyan!20,caption={SP}]{\textbf{SP:} #1}}

\newcommand{\ashu}[1]{\todo[fancyline,size=footnotesize,color=orange!20,caption={SP}]{\textbf{AT:} #1}}
\newcommand{\ashuline}[1]{\todo[inline,size=footnotesize,color=orange!20,caption={SP}]{\textbf{AT:} #1}}

\else
\newcommand{\soum}[1]{}
\newcommand{\soumline}[1]{}
\newcommand{\ashu}[1]{}
\newcommand{\ashuline}[1]{}
\fi

\title{Social Welfare under Heterogeneous Time Preferences}

\author{
Sarvin Bahmani$^1$\and
Soumyajit Paul$^1$\and
Sven Schewe$^1$\and
Shadi Tasdighi Kalat$^2$\and
Ashutosh Trivedi$^{1,2}$\\
\affiliations
$^1$University of Liverpool, UK \\
$^2$University of Colorado Boulder, USA\\
\emails
\{R.Bahmani, Soumyajit.Paul, Sven.Schewe\}@liverpool.ac.uk, \\\{Shadi.TasdighiKalat, Ashutosh.Trivedi\}@colorado.edu
}

\begin{document}

\maketitle

\begin{abstract}
In several socioeconomic-critical decision-making settings, such as fair resource allocation, climate policy, or AI alignment, multiple principals interact within a common arena. While it is well established that these principals may have differing preferences, decision-making under heterogeneous time preferences remains relatively unexplored. In particular, principals may weigh future outcomes differently and may derive distinct utilities from the same decisions.
Motivated by such scenarios, we introduce the notion of heterogeneous time preferences in MDPs, where multiple principals possess distinct reward functions and apply different discount factors to future rewards. 
To compute meaningful decisions in such settings, an AI agent must rely on a notion of optimality that accounts for the preferences of all principals. 

We adopt a utilitarian notion of social welfare, defined as the sum of utilities accrued to all principals, and study the synthesis of agent strategies that maximise this welfare. Under heterogeneous time preferences, we show that optimal strategies are no longer positional, even when all principals receive identical rewards. Nevertheless, optimal strategies remain structurally simple: they can be realized as pure finite-memory counting strategies, require only polynomial memory in the system size, and can be synthesized in polynomial time.
On the other hand, we show that deciding threshold questions for optimal positional strategies is NP-hard, exposing a poor trade-off: insisting on positional simplicity neither makes synthesis tractable nor preserves social welfare.

\end{abstract}


\section{Introduction}
\label{sec:intro}
Markov decision processes (MDPs) are canonical models of decision-making under uncertainty and form the backbone of several mature and widely deployed technologies, including reinforcement learning~\cite{Sutton18}, optimal control~\cite{Put94}, and economics and game theory~\cite{basar1999dynamic,filar1997competitive}.
In settings where a single decision-making agent acts on behalf of multiple principals, the preferences of each principal are encoded by a distinct reward function that quantifies the instantaneous utility they associate with each decision.
This work addresses the problem of synthesizing strategies for the agent that maximise a utilitarian notion of \emph{social welfare}, defined as the aggregate discounted payoff across all principals.

\paragraph{Time Preference and Discounting.}
As an agent interacts with a system, it accrues infinite sequences of rewards on behalf of multiple principals.
These rewards are typically aggregated using geometric discounting to evaluate the overall utility of a strategy.
The discount factor plays a dual role. First, it imposes a notion of effective finiteness on the expected trajectory length. Second, it endows the optimisation process with desirable mathematical properties, such as value contraction~\cite{Shapley53}.
The discount factor also admits a behavioural interpretation~\cite{tasdighi2024your,TasdighiKalat2026} rooted in human decision-making and captures the principle of \emph{time preference}, namely that a dollar today is valued more than a dollar tomorrow.

\paragraph{Heterogeneous Time Preferences.}
In many real-world settings, principals do not share identical attitudes toward delayed rewards.
While discounting is commonly used to aggregate long-run payoffs, the choice of discount factor reflects an underlying time preference and may vary across principals due to structural incentives, institutional roles, or strategic priorities.
Consequently, assuming a common discount factor can be overly restrictive and may obscure natural asymmetries in how principals value future outcomes.

This motivates the study of MDPs in which multiple principals, each with their own reward function and discount factor, delegate decision-making to a single agent constrained to follow a common strategy.
In such settings, the agent must reconcile heterogeneous time preferences when optimizing behaviour.
Our objective is to synthesize strategies that resolve this tension in service of the common good by maximizing a utilitarian notion of \emph{social welfare}, defined as the aggregate discounted payoff across all principals.

\input{Figures/fig-example-payoffs}

\paragraph{The Need for Memory.}
In the well-studied setting where all principals share the same discount factor, optimal strategies are known to be positional.
In contrast, we show that memory becomes necessary to maximise social welfare even in simple settings involving only two principals who receive identical rewards at every decision point.

\begin{example}[{\bf Memory is necessary under heterogeneous discounting}]
\label{ex:needmem}
    Consider a scenario in which two principals, Alice and Bob, jointly operate a hotel and must decide whether to update and expand it. If they continue operating the hotel without investment (action $a$ from state $s_0$), they each earn \$3 million per year. If they choose to update and expand (action $b$ from $s_0$), they incur a one-time cost of \$1 million each and transition to state $s_1$, where they earn \$6 million annually thereafter; see~\Cref{fig:exp-same-payoff}. Although both principals receive the same rewards, they differ in their discount factors: Alice discounts the future at $\alpha = \frac{2}{3}$, while Bob discounts more steeply at $\beta = \frac{1}{3}$.

    The space of strategies consists of either remaining in $s_0$ forever, denoted $a^\omega$, or staying in $s_0$ for $k$ steps and then transitioning to $s_1$, denoted $a^k b^\omega$. For each principal with discount factor $\lambda \in \{\alpha, \beta\}$, the discounted payoff for the strategy $a^\omega$ is 
    $\frac{3}{1 - \lambda}.$
    For the strategy $a^k b^\omega$ is:
    \[
   3\cdot\frac{1 - \lambda^k}{1 - \lambda} - \lambda^k + 6\cdot \frac{\lambda^{k+1}}{1 - \lambda} = \frac{3 - 4\lambda^k + 7 \lambda^{k+1}}{1 - \lambda}.
    \]

If the agent were to maximise Alice’s payoff alone, the optimal strategy would be to invest and transition to $s_1$; in contrast, if optimizing solely for Bob, given his steeper discounting, the agent would prefer to remain in $s_0$ indefinitely. 
However, neither of these individually optimal strategies maximises social welfare. 

The socially optimal strategy is to remain in $s_0$ for exactly two steps before transitioning to $s_1$, i.e., to follow the strategy $a^2 b^\omega$. Although this compromise is suboptimal for each principal's discounted payoff in isolation, it yields a higher social welfare. Bob’s gain from deferring the transition outweighs Alice’s marginal loss, resulting in a net improvement in aggregate welfare. In contrast, delaying the transition beyond two steps leads to a rapid decline in Alice’s utility that surpasses any marginal gain Bob receives, thereby decreasing social welfare. \Cref{fig:plot-same-payoff} illustrates the individual discounted payoffs of Alice and Bob, as well as the resulting social welfare, for the different strategies.

The optimal number of steps to remain in $s_0$ before transitioning to $s_1$ depends on the specific discount factors of the two principals. \Cref{fig:strategy-map} shows how the socially optimal strategy varies over the space of discount factor pairs. Each region in the map corresponds to a fixed number of steps the agent must wait in $s_0$ before executing the transition.
\end{example}

\paragraph{Contributions.}
We study a setting in which a single agent acts on behalf of multiple principals to optimise a utilitarian notion of \emph{social welfare}, defined as the total discounted payoff accrued to all principals. The framework explicitly accommodates heterogeneous time preferences by allowing each principal to have an individual discount factor that contributes to the aggregated objective.

We first show that \emph{memoryless} strategies are not sufficient for achieving optimal social welfare. Moreover, even when restricting attention to \emph{positional} (memoryless) strategies, optimality may require \emph{randomization}. For example, in the hotel investment scenario (Example~\ref{ex:needmem}), a randomized positional strategy that selects action $a$ in state $s_0$ with probability $1/4$ achieves an expected social welfare of $13.\overline{6}$, exceeding the welfare attained by any pure positional strategy.

We then study the computational complexity of optimising over such simpler strategy classes. We show that determining whether there exists a \emph{stationary} pure or mixed strategy that achieves a given social-welfare threshold is NP-hard.

As our main contribution, we show that socially optimal strategies can always be chosen from a class of \emph{pure finite-memory counting strategies}. Furthermore, under mild assumptions on the spacing of discount factors,\footnote{Specifically, this holds when the quantity $\frac{1}{\alpha/\beta - 1}$ for two discount factors $\alpha > \beta$ is polynomially bounded in the input size, as is the case when denominators are encoded in unary or discount factors are drawn from a finite set.} such strategies can be synthesized efficiently, in polynomial time.
Finally, as a proof of concept, we implement our synthesis algorithm and demonstrate empirically that it scales well in practice.

\paragraph{Related Work.}
Asymmetric discounting has been extensively studied in the context of repeated games, where it enables cooperative outcomes that are otherwise unsustainable under symmetric time preferences. 
 Lehrer and Pauzner~\shortcite{Lehrer-Pauzner-1999} showed that heterogeneity in discount factors can expand the equilibrium payoff set, as more patient players are able to delay gratification to support cooperation—effectively subsidizing less patient players early on. 
 Dasgupta and Ghosh~\shortcite{Dasgupta-Ghosh} further analysed how time-preference asymmetry reshapes incentives and enhances the stability of cooperative behaviour. 
These insights have been extended to multi-agent settings. In particular, Gu{'e}ron et al.~\shortcite{gueron2011folk} and Chen and Takashi~\shortcite{chen2012folk} show that in repeated games with heterogeneous discount factors, if all discount factors are sufficiently close to one, then any strictly individually rational payoff profile can arise as the outcome of a subgame-perfect equilibrium.

However, these works focus on \emph{stateless} repeated games, where each round is structurally identical and independent of past actions, aside from discounting, and where players optimise only their own discounted payoffs.
In contrast, our work introduces time-preference asymmetry into \emph{stateful} stochastic games, where agents (acting on behalf of multiple principals) influence state transitions over time. In this setting, discounting interacts with system dynamics, leading to qualitatively new forms of strategic complexity. To the best of our knowledge, this is the first formal study of cooperative stochastic games with heterogeneous time preferences.

Our setting is fundamentally different from state-dependent discounting in MDPs and stochastic games~\cite{Gan_Hennes_Majumdar_Mandal_Radanovic_2023,chatterjee-LPAR-13}. There, discount factors may depend on the state, but are uniform across agents at each time step.
Pitis~\shortcite{PitisNEURIPS2023} adopts an axiomatic perspective and shows that aggregating objectives with different time preferences cannot be represented by a Markovian reward function. Their analysis considers multiple discount factors for the same agent. In contrast, we study a setting with multiple principals, each endowed with its own discount factor, and fix a utilitarian social-welfare objective. We then analyse the induced optimization problem in stateful stochastic games.

Multi-objective optimisation is a closely related line of work, in which outcomes are evaluated with respect to multiple reward criteria and typical questions involve the synthesis of Pareto-optimal strategies or the comparison of outcomes under partial orders such as lexicographic preferences~\cite{Chatterjee-et-al-STAC06,chatterjee-LPAR-13}.
By contrast, the focus here is not on optimizing multiple objectives per se, but on decision-making on behalf of multiple principals whose discounted utilities are aggregated into a single utilitarian notion of social welfare.

\section{Preliminaries}
\label{sec:prelim}

A \emph{probability distribution} over a finite set $S$ is a function $d \colon S \to [0, 1]$ such that $\sum_{s \in S} d(s) = 1$. Let $\DIST(S)$ denote the set of all probability distributions over $S$.

\begin{definition}
A \emph{Markov Decision Process} (MDP) $\mdp$ is a tuple
$(S, A, T)$, where $S$ is a finite set of states, $A$ is a finite set of actions, and 
$T \colon S \times A \pto \DIST(S)$ is a \emph{probabilistic transition function}.
For $s \in S$, let $A(s)$ denote the set of actions available at $s$.  
For $s, s' \in S$ and $a \in A(s)$, we write $p(s'\mid s, a)$ to denote $T(s, a)(s')$.
\end{definition}

We measure the size of $\mdp$ by the size of its representation, where the values may be given in binary.

A {\it run} $\run$ of MDP $\mdp$ is an
$\omega$-word $\seq{s_0, a_0, s_1, a_1, s_2 \ldots} \in S \times (A \times S)^\omega$ such that $p(s_{j+1} | s_{j}, a_{j}) > 0$ for all $j \geq 0$.  
A finite run is a finite such sequence, that is, a word in
$ S \times (A \times S)^*$. 
Let $\Runs^\mdp (resp.~ \FRuns^\mdp)$ denote the set of runs ($resp$.\ finite runs) of the $\mdp$ and $\Runs{}^\mdp(s) (resp.\ \FRuns{}^\mdp(s))$ for the set of runs ($resp$.\ finite runs) of the $\mdp$ starting from state~$s$. 
We write $\last(\run)$ for the last state of a finite run $\run$.
A strategy of the agent in MDP $\mdp$ is a partial function $\sigma : \FRuns^\mdp \pto \DIST(A)$, such that $\supp(\sigma(\run)) \subseteq A(\last(\run))$ and is only defined for those runs $\run$ with all prefixes conforming to $\sigma$, i.e. $\forall $ prefix $\run' \in \FRuns^\mdp$ of $\run$, for some $a' \in supp(\sigma(\run'))$ and some $s' \in S$, $\run'(a',s')$ is in $\FRuns$ and also a prefix of $\run$.
Let $\Sigma$ be the set of all strategies of MDP $\mdp$. We also consider the following special classes of strategies: 
\begin{itemize}
    \item \emph{Pure:} $\sigma(\run)$ is always a point distribution wherever defined; a non-pure strategy is called a \emph{mixed} strategy.
    \item \emph{Stationary:} action depends only on the current state, i.e. $\last(\run) = \last(\run') \implies \sigma(\run) = \sigma(\run')$ 
    \item \emph{Positional:} both pure and stationary.
    \item \emph{Counting:} depends only on current state and run length.
\end{itemize}

The behaviour of an MDP $\mdp$, is defined on a probability space $(\Runs^\mdp(s), \Ff_{\Runs^\mdp(s)}, \Pr^\mdp(s))$ over the set of infinite runs $\Runs^\mdp(s)$ starting from state $s$ : the sigma-algebra is initially defined over cylinder sets corresponding to finite runs and is extended to a full probability measure over infinite runs using the Ionescu-Tulcea extension theorem.
Given a random variable $f \colon \Runs^\mdp \to \Real$ over the infinite runs of $\mdp$, we denote by
$\eE^\mdp(s) \set{f}$ the expectation of $f$ w.r.t.\ the probability space
$(\Runs^\mdp(s), \Ff_{\Runs^\mdp(s)}, \Pr^\mdp(s))$.

\subsection{Asymmetrically-Discounted MDPs}
As discussed earlier, heterogeneous time preferences arise when different principals evaluate delayed rewards using distinct discount rates. 
To formally capture this setting, we define \emph{asymmetrically-discounted MDPs} as a generalization of standard MDPs in which each principal is associated with their own reward function and discount factor. This model allows us to represent a single decision-making agent acting on behalf of multiple principals who differ in both their valuation of outcomes and their patience over time.

\begin{definition}
An \emph{asymmetrically-discounted MDP} is a tuple $(S, A, T, P, R, \Lambda)$, where:
\begin{itemize}
\item $(S, A, T)$ is a Markov decision process;
\item $P = \set{0, 1, \ldots, n{-}1}$ is a set of $n$ principals (players);
\item $R \colon S \times A \times P \to \mathbb{Q}$ is a reward function specifying the immediate reward received by each principal for each valid state-action pair;
\item $\Lambda = \set{\lambda_p \in (0,1) \cap \mathbb{Q} \mid p \in P}$ assigns each principal a rational discount factor.
\end{itemize}
Without loss of generality,\footnote{For the purpose of social welfare, players with equal discount factors may be merged by summing their rewards.} we assume the discount factors are ordered as $\lambda_0 > \lambda_1 > \cdots > \lambda_{n-1}$.
\end{definition}

The (expected) stochastic discounted payoff for principal~$i$ under strategy~$\sigma$, starting from initial state~$s$, over a run $\langle s_0 = s, a_0, s_1, a_1, s_2, a_2, \ldots \rangle$, is defined as:
\begin{equation}
\EDisct^{\sigma}_{i}(s) = \eE \left[ \sum_{j=0}^\infty \lambda_i^j \cdot R(s_j, a_j, i) \right].
\label{player_i_payoff}
\end{equation}

\begin{assumption}\label{assum-spaced}
Throughout this work, we assume that the discount factors are \emph{reasonably spaced}, meaning that for each pair $\lambda_i, \lambda_{i+1} \in \Lambda$, the quantity $\frac{1}{\lambda_i / \lambda_{i+1} - 1}$ is bounded by a polynomial in the size of the MDP. This assumption is strictly weaker than requiring unary encoding of discount factors or bounding their denominators by a polynomial function of the input size. 
In practice, discount factors often arise from empirical studies and are typically rounded to a fixed precision or selected from a small, predefined set: for instance, with denominators capped at 1,000. All such cases naturally yield reasonably spaced discount factors. With this assumption, we exclude cases in which two discount factors are too close. For more clarification, an example is provided in~\Cref{sec:example-spaced}.

\end{assumption}

\subsection{Social Welfare}
When an AI agent acts on behalf of multiple principals, each with distinct time preferences and utility functions, evaluating the overall benefit of a shared strategy requires an aggregate measure. The \emph{social welfare} of a strategy captures the total discounted payoff across all principals, providing a utilitarian benchmark for collective optimality.

Let $n$ be the number of principals, $s$ the initial state, and $\sigma \in \Sigma$ the strategy followed by the MDP~$\mdp$. The social welfare $\sw_{\sigma}(s)$ under strategy~$\sigma$ from state $s$ is defined as:
\begin{equation}
\sw_{\sigma}(s) = \sum_{i=0}^{n-1} \EDisct^{\sigma}_{i}(s).
\label{SoWe}
\end{equation}

Unfolding the definition of expected discounted payoff from Equation~\eqref{player_i_payoff}, we obtain:
\begin{equation}
\sw_{\sigma}(s) = \sum_{i=0}^{n-1} \mathbb{E} \left[ \sum_{j=0}^\infty \lambda_i^j \cdot R(s_j, a_j, i) \right].
\label{open-SoWe}
\end{equation}
The optimal social welfare $\sw_*$ from $s \in S$ is defined as: 
\[
\sw_*(s) = \sup_{\sigma \in \Sigma} \sw_{\sigma}(s).
\]
We say that a strategy $\sigma^*$ is welfare-optimal if 
\[
\sw_{\sigma_*}(s) = \sw_*(s).
\]

\begin{remark}
 We note that social welfare is a robust objective since it can encode any objective that is a linear combination of the individual expected payoffs by just factoring in the linear factor into their respective rewards. This includes natural cases such as when one player's objective has higher priority than the other expressed appropriately in the linear function; or, in the case where several players have same discount factor, all merged into one player.   
\end{remark}

In the remainder of the paper, we study how to compute strategies that maximise social welfare in asymmetrically-discounted MDPs. Section~\ref{sec:pos-hard} analyses the power and limitations of stationary and positional strategies, and establishes the computational hardness of identifying socially optimal strategies within these classes. Section~\ref{sec:algo} then presents an efficient algorithm for synthesising optimal strategies that maximise the utilitarian objective. Finally, Section~\ref{sec:experiment} evaluates the scalability of our algorithm with respect to the number of states, the number of principals, and the ratio between discount factors.

\section{Stationary Strategy Synthesis is Hard}
\label{sec:pos-hard}
As a first step, we investigate computing welfare optimal strategies with simpler strategies such as stationary or positional strategies.
We show that the problem of deciding whether there is a stationary strategy $\sigma$ such that $ \sw_\sigma \geq \kappa$ for a given threshold $\kappa$, is NP-hard.

\begin{theorem}\label{pos-NP}
The problem of determining whether in an asymmetrically-discounted MDP, the optimal social welfare is greater or equal to a given threshold is NP-hard for stationary strategies, even when restricted to positional strategies. The problem is NP-hard even when there are only two principals and all the rewards are same for all principals, or all the rewards are zero-sum for the two principals.
\end{theorem}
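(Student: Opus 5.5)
The plan is a polynomial reduction from 3-SAT. The construction relies on the one feature that separates our setting from ordinary discounted MDPs: with two discount factors $\alpha>\beta$, a reward collected at time $t$ is weighted by a time profile that need not be monotone in $t$. With identical rewards $r$ for both principals it is worth $r(\alpha^t+\beta^t)$. With zero-sum rewards $r,-r$ it is worth $r(\alpha^t-\beta^t)$, which is $0$ at $t=0$, rises to a unique maximum at some $t^\ast$, and then decays. Fixing a concrete pair such as $\alpha=2/3$, $\beta=1/3$ (as in Example~\ref{ex:needmem}) keeps all constants polynomial. The identical-reward case will use a gadget of the same shape as the hotel example, where the one-off cost followed by a higher stream makes the welfare of ``leave after $t$ steps'' peak at an interior $t$. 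The zero-sum case will use $\alpha^t-\beta^t$ directly. Since every strategy produced by the reduction is positional, hardness for positional strategies follows. Hardness for stationary (possibly randomized) strategies needs separate care, addressed in the third paragraph.

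\textbf{Construction.} Given a formula with variables $x_1,\dots,x_n$ and clauses $C_1,\dots,C_m$, the MDP starts in $s_0$ and moves uniformly at random to one of $m$ clause-entry states. The remaining states form a single shared chain of variable states $v_1,\dots,v_n$, each offering actions $\top$ and $\bot$, so that a positional strategy is exactly a truth assignment and consistency across clauses comes for free. Clause $C_j$ enters the chain so that its run passes the variable states of its literals. At $v_i$, choosing the value that satisfies a literal of the clause being processed diverts the run to a ``bonus'' state. All bonus exits are timed to land at the peak time $t^\ast$ of the weight profile, which is arranged by padding the diverting branches with dummy chains of appropriate polynomial length. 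A run that never gets diverted reaches a sink whose reward stream, seen at the time it is reached, is strictly worse than the peak.

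The difficulty is that a positional strategy cannot know which clause it is serving. I would therefore encode the clause identity in the random branch itself. Each clause gets its own copy of the path structure, but all copies route through the same $v_i$ states. The copy-specific information is carried only by where the run exits, and the exit is chosen by the clause branch before and after $v_i$ rather than at $v_i$. Then the bonus is collected with weight at the peak iff the clause is satisfied. The threshold $\kappa$ is set to $m$ times the peak bonus (scaled by $1/m$) plus the fixed contributions common to all runs.

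\textbf{Key steps, in order.}
\begin{enumerate}
\item Fix $\alpha,\beta$ and compute $t^\ast$ and the gap between the peak weight and the best non-peak weight. Argue that this gap is at least $1/\mathrm{poly}$, so that one unsatisfied clause costs more than any slack.
\item Build the gadgets and verify completeness: a satisfying assignment, played positionally, reaches the threshold $\kappa$.
\item Prove soundness for positional strategies: an assignment leaving some clause unsatisfied falls strictly below $\kappa$.
\item Prove soundness for stationary (randomized) strategies.
\end{enumerate}

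\textbf{Expected main obstacle.} I expect the hard parts to be the timing and routing design and the randomized soundness. For the design, the shared variable states must not let one clause's run inherit another clause's exit timing, and the padding must keep all exit times equal without creating shortcuts. For the randomized case, a stationary mixed strategy induces a distribution over arrival times at the sink, and welfare becomes an expectation over this distribution. The example of this paper already shows that mixing can beat all positional strategies, so this must be ruled out explicitly. My first attempt is to make the per-clause welfare a function of the mixing probabilities $p_i$ that is multilinear and maximised only at $\{0,1\}$ points. This requires the bonus to be obtainable only via deterministic timing, and placing each variable's choice at a single visit per run makes the dependence linear in each $p_i$. Multilinearity then lets me round any fractional strategy to a pure one without losing welfare, which reduces the mixed case to the positional soundness argument.
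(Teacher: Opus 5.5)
\textbf{There is a genuine gap: the routing your construction relies on is incompatible with the Markov property.} The successor distribution $T(v_i,a)$ depends only on $v_i$ and $a$. Once runs from different clause branches meet in a shared state $v_i$, everything that follows is identical for all of them: diversion to a bonus state, padding, which variable state comes next, and where the run exits. There is no ``clause branch after $v_i$''.

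Consequently, the step ``at $v_i$, choosing the value that satisfies a literal of the clause being processed diverts the run'' cannot be realised. Action $\top$ at $v_i$ diverts either every run or none, whether the clause contains $x_i$ or $\neg x_i$. Likewise, a run cannot be steered through exactly the variable states of its own clause once it has visited a shared one.

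This leaves you with a dilemma. Sharing the $v_i$ gives consistency but erases the clause identity. Copying them per clause keeps the identity but loses consistency. Equalising exit times by padding makes things worse, because arrival time is the only clause-dependent quantity a shared state can still see. Your design also has no state at which the existential choice of which literal satisfies $C_j$ is made.

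\textbf{The missing idea is to let arrival time at a shared decision state carry the context.} That state must be designed so that the ranking of its actions reverses between two arrival times. The paper does this with identical rewards and $\lambda_0=0.54$, $\lambda_1=0.4$:
\begin{itemize}
\item The stream $-1,+1,+1,\dots$ started at time $t$ is worth $\lambda^t(2\lambda-1)/(1-\lambda)$. This is positive for the patient principal and negative for the impatient one.
\item So moving from a literal state to $\top$ contributes about $-0.003$ to the welfare of that run if the state is reached at step $2$, and about $+0.006$ if it is reached at step $3$. Moving to $\bot$ gives the negatives.
\item Truth values live at the literal states, not at variable states.
\item Each $V_x$ reaches a literal of its choice at step $2$, which forces at most one of $x,\neg x$ to go to $\top$.
\item Each $C'_i$ reaches a literal of $C_i$ of its choice at step $3$, which forces some literal of every clause to go to $\top$.
\end{itemize}

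Your unimodal profile $\alpha^t-\beta^t$ could play the same role in the zero-sum case. It would have to be used to flip the preference between two actions at a shared state, not to time bonuses to a peak.

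Your multilinearity argument for randomised stationary strategies is sound. It transfers to such a construction, since every run visits each non-sink state at most once. It cannot, however, repair the reduction itself.
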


\begin{figure}[t!]
\centering
\scalebox{0.6}{
\begin{tikzpicture}[
  every state/.append style={fill=safecellcolor},
  every node/.append style={box state/.append style={fill=safecellcolor}},
  initial where=above,
  initial text={},
  true edge/.style={->, draw=green!60!black},
  false edge/.style={->, draw=red!70!black},
  clause edge/.style={->, dashed},
  var edge/.style={->, dotted}
]

  \node[ state, initial] (q_0) {$s_0$};

  \node[prob state] (prob) [below=1cm of q_0] {};

  \node[ state] (c_1) [below left=1.5cm and 4.5cm of prob] {$C_1$};
  \node[ state] (c_2) [below left=1.5cm and 3cm of prob]     {$C_2$};
  \node[ state] (c_3) [below left=1.5cm and 1.5cm of prob] {$C_3$};

\node[ state] (c_1_prime) [below=1.5cm of c_1] {$C_1^\prime$};
\node[ state] (c_2_prime) [below=1.5cm of c_2] {$C_2^\prime$};
\node[ state] (c_3_prime) [below=1.5cm of c_3] {$C_3^\prime$};

  \node[ state] (z) [below right=1.5cm and 4.5cm of prob]  {$V_z$};
  \node[ state] (y) [below right=1.5cm and 3cm of prob]    {$V_y$};
  \node[ state] (x) [below right=1.5cm and 1.5cm of prob]  {$V_x$};

  \foreach \i/\name/\labeltext in {
    1/x_var/x,
    2/not_x_var/\lnot x,
    3/y_var/y,
    4/not_y_var/\lnot y,
    5/z_var/z,
    6/not_z_var/\lnot z
  } {
    \node[ state] (\name) at ({(\i - 3.5) * 1.8}, -6) {$\labeltext$};
  }

  \node[state] (true) at (-2.7, -8) {$\top$};
  \node[state] (false) at (2.7, -8) {$\bot$};

  \path[-] (q_0) edge node {} (prob);

  \path[->] (prob) edge (c_1.north)
                   edge (c_2.north)
                   edge (c_3.north);

  \path[->] (prob) edge (x.north)
                   edge (y.north)
                   edge (z.north);

  \path[->] (c_1) edge (c_1_prime)
            (c_2) edge (c_2_prime)
            (c_3) edge (c_3_prime);

  \path[clause edge] 
    (c_1_prime) edge (x_var)
          edge (not_y_var)
          edge (z_var);

  \path[clause edge] 
    (c_2_prime) edge (not_x_var)
          edge (not_y_var)
          edge (z_var);

  \path[clause edge] 
    (c_3_prime) edge (not_x_var)
          edge (y_var)
          edge (not_z_var);

  \path[var edge]
    (x) edge (x_var)
         edge (not_x_var);

  \path[var edge]
    (y) edge (y_var)
         edge (not_y_var);

  \path[var edge]
    (z) edge (z_var)
         edge (not_z_var);

  \foreach \name in {x_var, not_x_var, y_var, not_y_var, z_var, not_z_var} {
    \path[true edge] (\name) edge node[left] {} (true);
    \path[false edge] (\name) edge node[right] {} (false);
}

  \path[->,loop left] (true) edge node {$+1$}  (true);
  \path[->,loop right] (false) edge node {$-1$} (false);

  \coordinate (right)  at (-1.2, -8);
  \coordinate (left) at (-4.3, -7.2);
  \draw[green!40!black] (right) to[bend left=-35] node[pos=0, below]{\footnotesize $-1$} (left);

  \coordinate (right)  at (1.2, -8);
  \coordinate (left) at (4.3, -7.2);
  \draw[red!50!black] (right) to[bend left=35] node[pos=0, below]{\footnotesize $+1$} (left);

  \path[clause edge, ultra thick] 
    (c_1_prime) edge (x_var);

  \path[clause edge, ultra thick] 
    (c_2_prime) edge (not_y_var);

  \path[clause edge, ultra thick] 
    (c_3_prime) edge (not_z_var);

  \path[var edge, ultra thick]
    (x) edge (not_x_var);

  \path[var edge, ultra thick]
    (y) edge (y_var);

  \path[var edge, ultra thick]
    (z) edge (z_var);

    \foreach \name in {not_x_var, y_var, z_var} {
    \path[false edge, ultra thick] (\name) edge node[right] {} (false);
  }

  \foreach \name in {x_var, not_y_var, not_z_var} {
    \path[true edge, ultra thick] (\name) edge node[left] {} (true);
  }
  
\end{tikzpicture}
}
\caption{
The MDP constructed from 3-SAT formula $\phi = (x \lor \neg y \lor z) \land (\neg x \lor \neg y \lor z) \land (\neg x \lor y \lor \neg z)$. 
 Curved arcs indicate rewards associated with groups of edges of same color.    
Edges without labels carry reward~$0$. Probabilistic nodes have outgoing transitions with uniform probability.
 The thick edges from $C'_i$ depicts a satisfying assignment.}
\label{fig:pos-NP}
\vspace{-1.2em}
\end{figure}

\begin{proof}[Proof sketch]
We provide a reduction from 3-SAT (\cite{Karp1972}). Given an instance of 3-SAT, a Boolean formula $\phi$, our reduction constructs an asymmetrically-discounted MDP $\mdp_\phi$ with two principals, such that the optimal social welfare is non-negative in $\mdp_\phi$ if and only if $\phi$ is satisfiable. See \Cref{fig:pos-NP} for an illustrative example of this reduction. Given a 3-SAT instance $\phi$ over a set $X = \set{x_1, x_2, \ldots, x_n}$ of $n$ variables and with $m$ clauses $C = \set{C_1, C_2, \ldots, C_m}$, we describe the MDP $\mdp_\phi$. Let $L = X \cup \{\neg x | x \in X\}$ be the set of all literals over $X$.
$\mdp_\phi$ has an initial state $s_0$ from which there is a single action, which with equal probability $\frac{1}{n+m}$ leads to a state from the set $C \cup \set{V_x: x \in X}$. This is the only action in $\mdp_\phi$ with probabilistic transition. From each state $C_i \in C$, there is a single action which leads to its duplicate state $C_i^{\prime}$ in $C^{\prime} = \set{C_1^{\prime}, C_2^{\prime}, \ldots, C_m^{\prime}}$. 
For each state $C_i^{\prime}$, there is an action 
for each literal $\ell$ in clause $C_i$ and leads to a state $\ell$ from the set of states $L$. 
Similarly for each state $V_x$ for variable $x$, there are actions for both literal $x$ and $\neg x$, leading to respective states $x$ and $\neg x$. The game has two sink states $\top$ and $\bot$ which has single trivial actions that stays on respective states. From each state $l \in L$, there are two actions that chooses between $\top$ and $\bot$. $\mdp_\phi$ has two principals with discount factors $0.54$ and $0.4$ respectively. The reward to both the principals are same. At each state $l \in L$, they receive $-1$ for choosing $\top$ and $+1$ for choosing $\bot$; at states $\top$ and $\bot$ they receive $+1$ and $-1$ respectively for staying in; they receive $0$ for every other action.    

The construction ensures that satisfying assignments correspond to strategies that direct short paths to low-reward sinks and long paths to high-reward sinks, aligning reward trajectories with the structure of~$\phi$. Carefully chosen discount factors ensure that only such satisfying strategies yield a social welfare above a given threshold.
\end{proof}
The complete proof details can be found in Appendix~\ref{sec:proof-full}.

\section{Computing Welfare-Optimal strategies}
\label{sec:algo}

In this section, we present a polynomial-time algorithm for synthesizing a joint strategy that maximises social welfare in a given asymmetrically-discounted MDP.

We begin by outlining our algorithm and the underlying intuition. The key insight is that there always exists a positional strategy that can be adopted after a finite number of steps without loss of optimality. Our algorithm first identifies such a \emph{limit strategy}.
Next, we show that the number of steps required before switching to this limit strategy is polynomial in the size of the MDP. By unrolling the MDP into a directed acyclic graph (DAG) up to this horizon and evaluating it backward, we obtain a simple and efficient construction of welfare-optimal strategies. These strategies are pure, finite-memory, and rely only on a counter.
We conclude the section by establishing that computing the optimal social welfare in asymmetrically-discounted MDPs is P-complete.

 \begin{algorithm}[t!]{}  
 \caption{Compute long-term strategy for optimal welfare}
 \label{algo:long-term}
 \begin{algorithmic}[1]
\STATE \textbf{Input} : $\mdp$
\STATE \textbf{Output}: MDP with long-term strategies and evaluation
\STATE $\mdp_0 \gets \mdp$
\FOR {$j \in \{0,\cdots,n-1\}$}
\STATE $V_j \gets$ optimal values for Principal $j$ in $\mdp_j$ 
\STATE $\mdp_{j+1} \gets$ $\mdp_j$ restricted to actions with a strategy optimal for Principal $j$ in $\mdp_j$
\ENDFOR
\RETURN  $\mdp_n,V_0,V_1,\ldots,V_{n-1}$
 \end{algorithmic} 
 \end{algorithm}

\subsection{Asymptotic Strategy Behaviour}
We begin by computing a long-term strategy by solving the MDP from the perspective of Principal~$0$—the most patient principal (highest discount factor). 
Since this reduces to a standard single-agent MDP with a discounted reward objective, optimal state values and corresponding actions can be computed in polynomial time~\cite{Put94}.
In general, multiple actions may yield the same optimal value in a given state. To resolve such ties, we sequentially invoke the preferences of the remaining principals as tie-breakers: first Principal~$1$, then Principal~$2$, and so on, following the order of increasing patience. This is formalized in Algorithm~\ref{algo:long-term}.

Let $V_i \colon S \to \mathbb{Q}$ denote the value function for Principal~$i$ in the MDP $\mdp_i$, where $\mdp_i$ is the MDP restricted to the remaining admissible actions after resolving ties at the previous levels.
The reason for correctness is that, due to faster discounting, an advantage of a principal with a lower discount factor will eventually be dwarfed by a disadvantage of a principal with a higher discount factor.
We do not have to prove this at this point; instead, we will later show that deviation from this strategy will eventually be unattractive.

\begin{lemma}\label{lem:longTermPoly}
Algorithm \ref{algo:long-term} runs in time polynomial in its input.
\end{lemma}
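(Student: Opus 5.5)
The plan is to bound the cost of each of the $n$ iterations of the loop by a polynomial, and to check that the encoding sizes of all intermediate objects (the MDPs $\mdp_j$ and the value vectors $V_j$) stay polynomial throughout. Since the number of iterations equals the number of principals, $n$, and $n$ is part of the input, polynomially many polynomial-time iterations give a polynomial overall bound, provided the sizes do not compound across iterations.

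For a single iteration $j$, the MDP $\mdp_j$ is a standard MDP with the single reward function $R(\cdot,\cdot,j)$ and the rational discount factor $\lambda_j$. Its optimal values $V_j$ are the unique optimal solution of the standard linear program
\[
\min \sum_{s} v_s \quad\text{s.t.}\quad v_s \ge R(s,a,j) + \lambda_j \sum_{s'} p(s'\mid s,a)\, v_{s'} \quad \text{for all } s \in S,\ a \in A_j(s),
\]
where $A_j(s)$ denotes the actions still available in $\mdp_j$. This LP is solvable in polynomial time by the ellipsoid or interior-point method~\cite{Put94}. Its solution is a vertex, so each $V_j(s)$ is a rational whose bit-size is polynomial in the encoding of $\mdp$, $R$ and $\lambda_j$.

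Next comes the restriction step. An action $a$ at $s$ belongs to some strategy optimal for principal $j$ in $\mdp_j$ exactly when it satisfies the Bellman optimality equality $V_j(s) = R(s,a,j) + \lambda_j \sum_{s'} p(s'\mid s,a) V_j(s')$. This is the standard characterisation: any strategy using only such actions attains $V_j$, and by the contraction property any strategy that uses a non-greedy action with positive probability at a reachable history is strictly worse. I would justify this via the uniqueness of the Bellman fixed point.

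With exact rational arithmetic, checking this equality for every pair $(s,a)$ costs $|S|\cdot|A|$ evaluations on polynomial-size rationals, so the check is polynomial. Every state keeps at least one action, since a maximiser exists, so $\mdp_{j+1}$ is again a well-formed MDP.

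The main point to verify is that sizes do not blow up across iterations. This holds because $\mdp_{j+1}$ is simply a sub-MDP of the original $\mdp$: it keeps the same transition probabilities and rewards and only removes actions. Its encoding is therefore bounded by that of $\mdp$. The values $V_j$ are not fed into later iterations except through the set of surviving actions, so each LP is of size at most the input size and no iterated growth occurs. Summing $n$ polynomial iterations yields the claim.
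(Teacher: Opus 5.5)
Your proposal is correct and takes the same approach as the paper, which simply observes that Algorithm~\ref{algo:long-term} solves $n$ single-principal discounted MDPs, each in polynomial time. Your extra details are the points the paper leaves implicit: the Bellman-equality (greedy-action) characterisation of the restriction step, and the observation that each restricted MDP is a sub-MDP of the input, so sizes do not grow across iterations.
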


This follows from the fact that the procedure evaluates n MDPs, each of which can be evaluated in polynomial time~\cite{Put94}.

\subsection{Deviations from Asymptotic Behaviour}
As illustrated by \cref{ex:needmem} in the introduction—where the long-term strategy prescribes always selecting action $b$—it can be advantageous for social welfare to initially deviate from the asymptotic strategy. Specifically, a deviation in the very first step (i.e., after zero steps) may yield improved outcomes for certain principals.
To quantify the impact of such a deviation, we define the payoff difference for Principal~$i$ resulting from taking action~$a$ in state~$s$, instead of following the long-term strategy. This difference~\cite{baird1993advantage,Sutton18} is given by:
\begin{equation}
\Delta_0(s,a,i) \mapsto R(s, a, i) + \lambda_i \big(\sum_{s' \in S} p(s' \mid s, a) \cdot V_i(s')\big) - V_i(s)
\label{raw-state-action-delta}
\end{equation}
where $V_i(s)$ denotes the value of state $s$ under the optimal strategy for Principal~$i$ in the corresponding one-principal MDP~$\mdp_i$.
If the same happens after $j$-steps, the payoff difference is
\begin{equation}
\Delta_j(s,a,i) \mapsto {\lambda_i}^j \cdot \Delta_0(s, a, i)
\label{raw-state-action-delta}
\end{equation}
Note that we now have
\begin{equation}\label{eq:social-sums}
\sw_{\sigma}(s) =\sum_{i=0}^{n-1}V_i(s) + \sum_{j=0}^\infty \mathbb E \sum_{i=0}^{n-1} \Delta_j(s_j,a_j,i)
\end{equation}

We now only need to argue that, for reasonably spaced discount factors, there is a small $\kappa$ s.t., for all $j>\kappa$, $\sum_{i=0}^{n-1} \Delta_j(s_j,a_j,i) \leq 0$ holds.
Once this is the case, we can optimise social welfare by never deviating from $\mdp_n$ from this point onwards.

\begin{observation}
For all states $s\in S$ and all actions $a \in A(s)$ for $\mdp_n$, all $i\in P$, and $j\in \mathbb N_0$, we have $\Delta_j(s,a,i)=0$.
For all actions $a \in A(s)$ for $\mdp$, but not for $\mdp_n$, we have that there is a minimal $i$ such that $\Delta_0(s,a,i) \neq 0$; and for this $i$, $\Delta_0(s,a,i) < 0$ holds.
\end{observation}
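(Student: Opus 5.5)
The plan is to prove both parts of the Observation by unwinding Algorithm~\ref{algo:long-term} level by level, using the Bellman optimality characterisation of discounted single-principal MDPs~\cite{Put94}.

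\textbf{Setup.} Fix $i$ and consider the one-principal MDP $\mdp_i$ with discount $\lambda_i$ and optimal value $V_i$. The standard fact is that an action $a\in A(s)$ of $\mdp_i$ is used by some optimal strategy if and only if it attains the maximum in the Bellman equation:
\[
R(s,a,i)+\lambda_i\sum_{s'}p(s'\mid s,a)V_i(s') = V_i(s).
\]
Equivalently, $\Delta_0(s,a,i)=0$. For every other action of $\mdp_i$, Bellman optimality gives $\Delta_0(s,a,i)<0$. So I read line~6 of the algorithm as keeping exactly the Bellman-maximising actions; this set is nonempty in every state, so each $\mdp_{i+1}$ is again a well-defined MDP.

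\textbf{First claim.} Let $a\in A(s)$ be an action of $\mdp_n$. Because $\mdp_n\subseteq\mdp_{i+1}$, the action $a$ survived the restriction at level $i$. It is therefore Bellman-maximising in $\mdp_i$, so $\Delta_0(s,a,i)=0$, and hence $\Delta_j(s,a,i)=\lambda_i^j\cdot 0=0$ for every $j\in\mathbb N_0$.

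One subtlety needs care. $\Delta_0(s,a,i)$ is defined using $V_i$, the value in $\mdp_i$, not in $\mdp$. That is harmless, because $\Delta_0$ only involves $V_i$ at successor states, and these are states of the same state space $S$.

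\textbf{Second claim.} Let $a\in A(s)$ be an action of $\mdp$ that is not an action of $\mdp_n$. Along the chain $\mdp=\mdp_0\supseteq\mdp_1\supseteq\cdots\supseteq\mdp_n$, let $i$ be the unique index with $a\in A(s)$ in $\mdp_i$ but not in $\mdp_{i+1}$.

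For every $k<i$, the action $a$ survived level $k$, so $\Delta_0(s,a,k)=0$ by the characterisation above. At level $i$, $a$ is available in $\mdp_i$ but was removed, so it is not Bellman-maximising there. Hence $\Delta_0(s,a,i)<0$, because $V_i(s)$ is the maximum over the actions of $\mdp_i$. This $i$ is therefore the minimal index with $\Delta_0(s,a,i)\neq 0$, and at that index the value is negative.

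\textbf{Main obstacle.} The only delicate step is the characterisation "used by some optimal strategy $\iff$ Bellman-tight". For discounted MDPs this holds in both directions. If $a$ is Bellman-tight, the positional strategy that plays $a$ at $s$ and an optimal action elsewhere satisfies the optimality equation; by contraction, its value is then $V_i$. If $a$ is strictly suboptimal in the Bellman sense, any strategy playing it at $s$ with positive probability loses value at $s$. If needed, I would state this as a short preliminary remark citing~\cite{Put94}.
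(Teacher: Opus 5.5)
Your argument is correct and is the reasoning the paper leaves implicit: the paper states this as an Observation without proof, relying on the construction of Algorithm~\ref{algo:long-term}, and you read line~6 as keeping exactly the Bellman-tight actions of $\mdp_j$. Under that reading, every surviving action has $\Delta_0(s,a,j)=0$ at each level, while a removed action has zero differences below its removal level $i$ and $\Delta_0(s,a,i)<0$ there. Your remark that ``optimal'' must mean optimal from every state, so that removal coincides with strict Bellman suboptimality, is a worthwhile clarification the paper glosses over.
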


\begin{lemma}\label{lem:initial-sums}
For all $s\in S, a\in A(s), i \in P, j\in \mathbb N_0$, we have
\begin{multline*}
\forall i' \leq i, 
\sum_{p=0}^{i'}\Delta_j(s,a,p)\leq 0 \implies \\
\forall j' \geq j, \text{ and } \forall i' \leq i \text{ we have } \sum_{p=0}^{i'}\Delta_{j'}(s,a,p)\leq 0.    
\end{multline*}
\end{lemma}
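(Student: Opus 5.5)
The plan is to reduce the statement to a weighted-partial-sum inequality and then apply Abel summation (summation by parts). Fix $s$, $a$, $i$ and $j$, and abbreviate $x_p = \Delta_j(s,a,p)$ for $p \le i$. The hypothesis then reads $X_{i'} := \sum_{p=0}^{i'} x_p \le 0$ for every $i' \le i$.

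First, I express $\Delta_{j'}$ in terms of $\Delta_j$. By the definition $\Delta_j(s,a,p) = \lambda_p^j \cdot \Delta_0(s,a,p)$, for any $j' \ge j$ we have
\[
\Delta_{j'}(s,a,p) = \lambda_p^{j'-j}\cdot \Delta_j(s,a,p) = c_p\, x_p, \qquad\text{where } c_p := \lambda_p^{\,j'-j}.
\]
Since the discount factors are ordered $\lambda_0 > \lambda_1 > \cdots > \lambda_{n-1}$ and all lie in $(0,1)$, and since $j'-j \ge 0$, the weights satisfy $c_0 \ge c_1 \ge \cdots \ge c_i > 0$. If $j' = j$, all weights equal $1$ and the claim is just the hypothesis.

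The main step is summation by parts. For any $i' \le i$, writing $x_p = X_p - X_{p-1}$ with $X_{-1} = 0$, we get
\[
\sum_{p=0}^{i'} \Delta_{j'}(s,a,p) = \sum_{p=0}^{i'} c_p x_p = \sum_{p=0}^{i'-1} (c_p - c_{p+1})\, X_p \;+\; c_{i'}\, X_{i'} .
\]
Each coefficient $c_p - c_{p+1}$ is nonnegative by monotonicity of the weights, and $c_{i'} > 0$. Each $X_p$ with $p \le i'\le i$ is nonpositive by hypothesis. Hence every term is $\le 0$, so the sum is $\le 0$. This holds for every $i' \le i$ and every $j' \ge j$, which is the claim.

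I expect no serious obstacle. The only point needing care is that the hypothesis must cover all prefixes $i' \le i$, not just $i' = i$: the Abel rearrangement uses every partial sum $X_p$ for $p \le i'$. This is exactly why the lemma is stated with the universally quantified $i'$ on both sides. The ordering of the discount factors, assumed without loss of generality in the definition of asymmetrically-discounted MDPs, is what makes the weights monotone.
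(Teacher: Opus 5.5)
Your proof is correct and rests on the same summation-by-parts identity as the paper. The paper applies it one step at a time, with weights $\lambda_p$ and induction from $k$ to $k+1$; you apply it once with weights $\lambda_p^{\,j'-j}$, which avoids the induction.
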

\begin{proof}
This can be shown by a simple inductive argument with trivial basis ($k = j$) and induction step $k \mapsto k+1$ as
\[\begin{array}{cl}
\hspace*{-2mm}\sum_{p=0}^{\ell}\Delta_{k+1}(s,a,p)  = & \hspace*{-3mm}\sum_{\ell' = 0}^{\ell-1} (\lambda_{\ell'}{-}\lambda_{\ell'+1}) \sum_{p=0}^{\ell'}\Delta_{k}(s,a,p) \\
& + \lambda_\ell\sum_{p=0}^{\ell}\Delta_{k}(s,a,p) \leq 0 
\end{array}\]
for $\ell \leq i$, where the inner sums are $\leq 0$ by induction hypothesis.
\end{proof}

As the $V_i$ returned by Algorithm \ref{algo:long-term} are the solution to a system of linear equations defined by any remaining strategy in $\mdp_n$, their values are fractions with both the nominator and the denominator singly exponential (in value, and polynomial in length) in the input to the algorithm.
Thus, to estimate $\kappa$, when we take such a $\Delta_0(s,a,i) < 0$, then $\kappa_{s,a}'=\sum_{j=i+1}^{n-1} \max\{0,\Delta_0(s,a,j)\}/|\Delta_0(s,a,i)|$ , is exponentially bounded by value, and polynomially by length, in the input to Algorithm \ref{algo:long-term}.
It is now easy to see that%
\footnote{This skips over the corner case of $i=n-1$; in that case, $\sum_{i=0}^{n-1} \Delta_j(s,a,i) < 0$ holds for all $j\in \mathbb N_0$ and we set $\kappa_{s,a}=0$.},
for
\begin{equation}\label{eq:kappa-calculate}
\kappa_{s,a} 
=\big\lceil\log_{\lambda_i/\lambda_{i+1}} \kappa_{s,a}'\big\rceil
=\big\lceil\frac{ \ln \kappa_{s,a}'}{\ln(\lambda_i/\lambda_{i+1})}\big\rceil
\end{equation}
and $j \geq \kappa_{s,a}$, $\sum_{i=0}^{n-1} \Delta_j(s,a,i) \leq 0$ holds.
For an action $a \in A(s)$ for $\mdp_n$, we have $\Delta_j(s,a,i) = 0$ for all principals $i$ and all $j\in \mathbb N_0$, so we choose $\kappa_{s,a}=0$. We set $\kappa  = \max\{\kappa_{s,a}\mid s \in S, a \in A(s)\}$ to the maximal value among the $\kappa_{s,a}$, so that, for all $j \geq \kappa$, $\Delta_j(s,a,i)\leq 0$ holds.

\begin{lemma}\label{lem:notTooLong}
A social welfare cannot be obtained when deviating from the strategies available in $\mdp_n$ after $\kappa$ steps (as defined above).
If the discount factors are reasonably spaced, $\kappa$ is polynomial in the size of the input to Algorithm~\ref{algo:long-term}.
\end{lemma}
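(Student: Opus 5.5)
The plan has two parts, matching the two claims of the statement. First, I will show that once $j \geq \kappa$, no deviation from the actions of $\mdp_n$ can raise social welfare. Second, I will bound $\kappa$ polynomially using \Cref{assum-spaced}.

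\textbf{Step 1: the aggregated deviation terms become non-positive.} Fix $s$ and an action $a \in A(s)$ not in $\mdp_n$, and let $i$ be the minimal principal with $\Delta_0(s,a,i) \neq 0$; by the Observation, $\Delta_0(s,a,i) < 0$. For $j \geq \kappa_{s,a}$ I will write
\[
\sum_{p=0}^{n-1}\Delta_j(s,a,p) = \lambda_i^j\Delta_0(s,a,i) + \sum_{p>i}\lambda_p^j\Delta_0(s,a,p).
\]
Each positive term with $p>i$ satisfies $\lambda_p^j \leq \lambda_{i+1}^j$. The positive part of the tail is therefore at most $\lambda_{i+1}^j\,\kappa'_{s,a}\,|\Delta_0(s,a,i)|$. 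This is at most $\lambda_i^j|\Delta_0(s,a,i)|$ as soon as $(\lambda_i/\lambda_{i+1})^j \geq \kappa'_{s,a}$, which is exactly the choice in \eqref{eq:kappa-calculate}. The corner cases are handled separately: $i = n-1$ is immediate, and actions of $\mdp_n$ give identically zero terms. (\Cref{lem:initial-sums} offers an alternative route to monotonicity in $j$, but the direct estimate suffices.)

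\textbf{Step 2: no deviation after $\kappa$ steps helps.} Plug Step 1 into \eqref{eq:social-sums}. For any strategy $\sigma$, define $\sigma'$ to agree with $\sigma$ on runs of length $<\kappa$ and to play a fixed positional strategy of $\mdp_n$ afterwards. Then every summand with $j \geq \kappa$ is $0$ under $\sigma'$ and pointwise $\leq 0$ under $\sigma$, because the expectation of non-positive quantities is non-positive. The terms with $j<\kappa$ are identical for the two strategies, since they depend only on the common prefix behaviour.

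One technical point needs care here: \eqref{eq:social-sums} decomposes welfare relative to the $V_i$, telescoping along the run. I would verify that the telescoping sum converges, which holds because rewards are bounded and $\lambda_i<1$. Hence $\sw_{\sigma'}(s)\geq \sw_\sigma(s)$, so welfare-optimal strategies may be taken to follow $\mdp_n$ from step $\kappa$ on.

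\textbf{Step 3: $\kappa$ is polynomial.} From the discussion before the lemma, $\kappa'_{s,a}$ has numerator and denominator of polynomial bit length, so $\ln\kappa'_{s,a}$ is polynomial in the input size. For the denominator, write $\lambda_i/\lambda_{i+1} = 1+\epsilon$ and use $\ln(1+\epsilon)\geq \epsilon/(1+\epsilon)$. This gives
\[
1/\ln(\lambda_i/\lambda_{i+1}) \leq 1+1/\epsilon = 1 + \frac{1}{\lambda_i/\lambda_{i+1}-1},
\]
which is polynomially bounded by \Cref{assum-spaced}. The product of these two polynomial bounds, maximised over polynomially many pairs $(s,a)$, is polynomial.

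The main obstacle is justifying that $\kappa'_{s,a}$ indeed has polynomial bit size, which requires the values $V_p$ from Algorithm~\ref{algo:long-term} to be rationals of polynomial length. I would get this by Cramer's rule on the linear system $(I-\lambda_p P_\pi)V=r_\pi$ for a positional optimal strategy $\pi$. The key consequence is that $|\Delta_0(s,a,i)|$ is at least $2^{-\mathrm{poly}}$ whenever it is nonzero.
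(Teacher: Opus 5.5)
Your proposal is correct and follows the paper's route: you decompose welfare via \eqref{eq:social-sums}, justify $\kappa_{s,a}$ by comparing $\lambda_i^j|\Delta_0(s,a,i)|$ against $\lambda_{i+1}^j\kappa'_{s,a}|\Delta_0(s,a,i)|$ (the same estimate the paper gives just before the lemma), and get polynomiality from the polynomial bit size of $\kappa'_{s,a}$ together with the spacing assumption. Your explicit switching strategy $\sigma'$, the convergence check for the telescoping sum, and the inequality $\ln(1+\epsilon)\geq\epsilon/(1+\epsilon)$ make rigorous what the paper only sketches; note that the paper's approximation of $\ln(\lambda_i/\lambda_{i+1})$ is actually written inverted, and your bound is the correct form of it.
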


\begin{proof}
For the first part of the claim, we note that $\kappa$ is chosen so that the initial sums from Lemma \ref{lem:initial-sums} are all non-positive.
Thus, for $j \geq \kappa$ steps, all expected sums $\mathbb E \sum_{i=0}^{n-1} \Delta_j(s_j,a_j,i)$ from Equation \ref{eq:social-sums} are non-positive and are zero if, and only if, the chance of taking an action available in $\mdp_n$ is one.
For the second part, we have argued that each $\ln \kappa_{s,a}'$ is polynomially bounded (\Cref{assum-spaced}). For the denominator, we have that $\ln ((\lambda_i/\lambda_{i+1})) \approx \frac{1}{(\lambda_i/\lambda_{i+1})-1}$ when $(\lambda_i/\lambda_{i+1})$ is close to $1$ as $(1+ \frac{1}{k})^k \approx e$.
\end{proof}

\subsection{Computing Social Welfare}
In order to find the social welfare (and a strategy that achieves it), we consider Algorithm \ref{algo:long-term} again and note that we have already determined $\sum_{i=0}^{n-1}V_i(s)$ and showed that we have to follow $\mdp_n$ after $\kappa$ steps as defined in the previous section---where we also showed that $\sum_{j=\kappa}^\infty \mathbb E \sum_{i=0}^{n-1} \Delta_j(s_j,a_j,i) = 0$ for social welfare.
To maximise social welfare, maximise:
\begin{equation}\label{eq:initial-sums}
\sum_{j=0}^{\kappa-1} \mathbb E \sum_{i=0}^{n-1} \Delta_j(s_j,a_j,i) \ .
\end{equation}
To do this, we can simply unravel the MDP $\mdp = (S, A, T, P, R, \Lambda
)$ to a finite DAG 
\[
\mathcal D = (S \times \{0,\ldots,\kappa{-}1\}, A', T', P, R', \Lambda
)\]
on the fly, where, for $j<\kappa -1$,
$A'(s,j)= A(s)$,
$T'(s,j,a)(s',j+1) = T(s,a)(s')$, and
$R'(s,j,a)= \sum_{i\in P}\Delta_j(s,a,i)$, while
$A'(s,\kappa-1)=\emptyset$.
Note that we do not calculate $\kappa$, but stop unravelling once, for all $s \in S$ and all $a \in A(s)$ all initial sums of $\sum_{i =0}^{i'}\Delta_j(s,a,i)$ are non-positive, which Lemma \ref{lem:initial-sums} shows to henceforth holds forever, while Lemma \ref{lem:notTooLong} shows that this happens after few iterations.

\begin{lemma}\label{lem:correct}
An optimal total reward strategy $\sigma$ for the DAG MDP $\mathcal{D}$ (followed by any stay-in-$\mdp_n$ strategy) defines a finite counting strategy $\sigma'$ for $\mdp$ that provides optimal social welfare.
Moreover, if its expected reward for the initial strategy on $\mathcal D$ for a state $(s,0)$ is $E_\sigma(s,0)$, then the
the social welfare for a starting state $s$ is $\sw_{\sigma'} = \sum_{i=0}^{n-1}V_i(s) + E_\sigma(s,0)$.
\end{lemma}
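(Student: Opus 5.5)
The plan is to reduce everything to the decomposition in Equation~\eqref{eq:social-sums}, which expresses the social welfare of any strategy $\sigma$ from $s$ as $\sum_{i}V_i(s)$ plus the expected sum over steps $j$ of $\sum_i \Delta_j(s_j,a_j,i)$.

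\textbf{Step 1: justify the decomposition.} I would first check Equation~\eqref{eq:social-sums} by a telescoping argument for each principal $i$ separately. Consider the partial sum $\sum_{j<N}\lambda_i^j R(s_j,a_j,i)$. Adding and subtracting $\lambda_i^j V_i(s_j)$, and using the tower property, the expected value of $\lambda_i^j V_i(s_j)$ can be rewritten via the successor states: it equals $\lambda_i^{j}\,\lambda_i\sum_{s'}p(s'\mid s_j,a_j)V_i(s')$ taken in expectation one step earlier. This gives
\[
\mathbb E\sum_{j<N}\lambda_i^jR(s_j,a_j,i)=V_i(s)+\mathbb E\sum_{j<N}\Delta_j(s_j,a_j,i)-\lambda_i^N\,\mathbb E\, V_i(s_N).
\]
Since $V_i$ is bounded and $\lambda_i<1$, the last term vanishes as $N\to\infty$. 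Absolute convergence of the series lets us exchange the sum and the expectation. Summing over $i$ yields Equation~\eqref{eq:social-sums} for every strategy $\sigma$, including randomized and history-dependent ones.

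\textbf{Step 2: split the deviation sum at $\kappa$.} I would split $\sum_j$ into $j<\kappa$ and $j\ge\kappa$. For the tail, Lemma~\ref{lem:notTooLong}, together with the choice of $\kappa$ and Lemma~\ref{lem:initial-sums}, gives $\sum_i\Delta_j(s',a',i)\le 0$ for every state–action pair once $j\ge\kappa$. It also gives equality when $a'$ is an action of $\mdp_n$, by the Observation. Hence the tail is at most $0$ for every strategy, and exactly $0$ for any strategy that stays in $\mdp_n$ from step $\kappa$ on. Consequently, for every $\sigma$,
\[
\sw_\sigma(s)\le \sum_i V_i(s)+\mathbb E_\sigma\sum_{j<\kappa}\sum_i\Delta_j(s_j,a_j,i).
\]

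\textbf{Step 3: match the truncated sum with the DAG.} The truncated sum is exactly the expected total reward in $\mathcal D$ of the strategy induced by $\sigma$ on the unrolled DAG, because $R'(s,j,a)=\sum_i\Delta_j(s,a,i)$ and the transitions coincide. This induced strategy may be history-dependent, but that is harmless: total-reward optimality in a finite-horizon DAG is attained by a pure strategy that depends only on the node $(s,j)$, computed by backward induction. So the bound above is at most $\sum_iV_i(s)+E_\sigma(s,0)$, where $\sigma$ now denotes the optimal DAG strategy.

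\textbf{Step 4: realize the bound with a counting strategy.} Define $\sigma'$ to play $\sigma(s,j)$ at step $j<\kappa$ in state $s$, and from step $\kappa$ on to follow a fixed positional strategy of $\mdp_n$. This $\sigma'$ is pure, and its choice depends only on the current state and on $\min(j,\kappa)$, so it is a finite counting strategy. By Step 2 its tail contributes exactly $0$, so $\sw_{\sigma'}(s)=\sum_iV_i(s)+E_\sigma(s,0)$. This meets the upper bound, which proves both optimality and the formula in the statement.

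\textbf{Main obstacle.} The delicate point is Step 1 together with the tail claim. One must ensure that $\Delta$ is measured against the $V_i$ of the restricted MDPs $\mdp_i$, and not against the unrestricted optima. The telescoping argument works for any bounded function $V_i$ whatsoever, so the decomposition holds regardless of which $V_i$ is used. The sign control in the tail, however, relies on the Observation and on Lemma~\ref{lem:initial-sums}, so I would carefully verify that the Observation's sign claim holds with these restricted values.
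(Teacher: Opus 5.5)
Your proposal is correct and follows the paper's proof. Both use the decomposition of Equation~\eqref{eq:social-sums}, bound the tail $j\ge\kappa$ by $0$ via the choice of $\kappa$ (with equality on actions of $\mdp_n$ by the Observation), and identify the remaining finite sum with the total reward in $\mathcal D$. You make explicit what the paper leaves implicit: the telescoping derivation of Equation~\eqref{eq:social-sums} for arbitrary strategies, the reduction of history-dependent strategies on $\mathcal D$ to backward induction, and the realisation by a counting strategy. Note that your Step~3 reads $\mathcal D$ as having actions on layers $0,\ldots,\kappa-1$, as in the proof of Theorem~\ref{thm:main}, rather than the literal definition with $A'(s,\kappa-1)=\emptyset$.
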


\begin{proof}
By construction of $\mathcal D$, its optimal solutions maximise $\sum_{j=0}^{\kappa-1} \mathbb E \sum_{i=0}^{n-1} \Delta_j(s_j,a_j,i)$, and $\kappa$ is selected so that, by Lemma \ref{lem:initial-sums}, $\sum_{i=0}^{n-1} \Delta_j(s,a,i)\leq 0$ for all $j\geq \kappa$, while following $\mdp_n$ provides $\sum_{i=0}^{n-1} \Delta_j(s,a,i) = 0$.
The value for $\sw_{\sigma'}$ can be obtained from Equation~\ref{eq:social-sums}.
\end{proof}

\subsection{Complexity}
The preceding results yield the following complexity classification. 
For reasonably spaced discount factors, welfare optimisation is polynomial-time solvable, while the associated decision problem remains \mbox{P-complete}. 
Here, \textsc{SocialWelfare} asks whether there exists a strategy~$\sigma$ whose social welfare is non-negative.

\begin{theorem}\label{thm:main}
The social welfare of an asymmetrically-discounted MDP $\mdp$ with reasonably spaced discount factors can be computed in polynomial time.
Moreover, \textsc{SocialWelfare} is \mbox{P-complete}, even when restricted to two-player zero-sum games or single-player MDPs.
\end{theorem}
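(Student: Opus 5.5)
The proof has two parts: an upper bound (polynomial-time computation) and P-hardness. For the upper bound I combine the results of this section in order.

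First, by Lemma~\ref{lem:longTermPoly}, Algorithm~\ref{algo:long-term} computes $\mdp_n$ and the values $V_0,\dots,V_{n-1}$ in polynomial time. From these we obtain every $\Delta_0(s,a,i)$ via Equation~\eqref{raw-state-action-delta} with polynomially many arithmetic operations. All numbers stay of polynomial bit-length, since the $V_i$ are solutions of polynomial-size linear systems.

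Next, by Lemma~\ref{lem:notTooLong} and Assumption~\ref{assum-spaced}, the horizon $\kappa$ is polynomial in the input size. Hence the unravelled DAG $\mathcal D$ has $|S|\cdot\kappa$ states and polynomially many transitions. Its rewards $R'(s,j,a)=\sum_i \lambda_i^j\Delta_0(s,a,i)$ involve powers $\lambda_i^j$ with $j\le\kappa$, so their bit-length is polynomial. Backward induction on $\mathcal D$ therefore takes polynomially many arithmetic operations on polynomial-size rationals. It yields $E_\sigma(s,0)$ together with an optimal pure counting strategy. Lemma~\ref{lem:correct} then gives $\sw_*(s)=\sum_i V_i(s)+E_\sigma(s,0)$, and deciding \textsc{SocialWelfare} amounts to comparing this value with $0$.

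The step I would check most carefully is this bit-length bookkeeping for the exponentiated rewards. The key observation is that $\kappa$ is polynomial in value, not just in length, so $\lambda_i^\kappa$ has numerator and denominator of polynomial bit-length. The intermediate backward-induction values are sums of at most $\kappa$ layers of such terms, which keeps their size polynomial.

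For P-hardness, I reduce from a P-complete MDP problem. Deciding whether the optimal discounted value of a standard MDP is at least a threshold is P-complete, following Papadimitriou and Tsitsiklis for discounted MDPs; the classical reductions from the circuit value problem go through with any fixed discount factor.

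For the single-player case, given such an MDP and threshold $t$, I build an asymmetrically-discounted MDP with one principal that has the same rewards. The threshold is moved into the reward by subtracting $(1-\lambda)t$ at every step, which shifts the value by exactly $-t$; this makes ``non-negative'' match ``$\ge t$''.

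For the two-player zero-sum case, I add a second principal with discount factor $\lambda_1<\lambda_0$. Its rewards are the negation of the first principal's rewards on transitions that only occur at the sinks, or more simply its rewards are all $0$, which is zero-sum up to the trivial case. If the reading of ``zero-sum'' requires $R(\cdot,\cdot,1)=-R(\cdot,\cdot,0)$ everywhere, I would instead start from a circuit-value-style MDP gadget in which rewards are collected only at sinks. I would then scale so that the difference of the two discounted sums has the same sign as the original value. This is the place I expect needs care: choose $\lambda_0,\lambda_1$ and the sink depths so that $\lambda_0^d-\lambda_1^d>0$ multiplies all outcomes uniformly, which requires all sinks at equal depth $d$. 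Equal depth can be ensured by padding the layered circuit-value construction. All of these reductions are log-space.
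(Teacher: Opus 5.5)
Your upper bound follows the paper's argument: Algorithm~\ref{algo:long-term}, the polynomial horizon of Lemma~\ref{lem:notTooLong}, and backward induction on the unravelled DAG. Your bit-length bookkeeping is a useful addition the paper omits. Note, though, that the common denominator of the backward-induction values also picks up a factor from the transition probabilities at each layer; this still gives polynomially many bits, because $\kappa$ is polynomial in value.

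Where you genuinely differ is hardness, and your route is also correct. The paper reduces directly from reachability games on alternating graphs. $\OOR$ nodes become uniformly random states, $\AAND$ nodes become choice states, and the target gets reward $-1$. Then $\sw_*=0$ if the safety player wins and $\sw_*<0$ otherwise. For the zero-sum variant, the second principal simply receives the negated target reward. You instead import P-completeness of discounted MDP values from Papadimitriou and Tsitsiklis, which is itself a circuit-value reduction using essentially the same gadget, and you shift the threshold into the rewards.

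What each buys: your route preserves arbitrary thresholds in the single-principal case but leans on an external theorem. The paper's route is self-contained and purely qualitative. That is exactly why its zero-sum case needs none of your padding. A visit to a bad state at step $j\ge 1$ contributes $\lambda_1^j-\lambda_0^j<0$ for every $j$, so $\sw_\sigma\ge 0$ holds iff the bad states are avoided almost surely, whatever the depths at which they are reached. Your equal-depth construction is therefore correct but unnecessary for the threshold-$0$ question \textsc{SocialWelfare}.

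One thing to drop: your first suggestion of giving the second principal all-zero rewards is not zero-sum, so it does not establish the zero-sum case.
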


\begin{proof}
For inclusion in $P$, we have shown that our algorithm terminates with the social welfare (cf.\ Lemma \ref{lem:correct}.
To obtain this, we first run Algorithm \ref{algo:long-term} to obtain the long-term values and strategy time polynomial in the input to Algorithm \ref{algo:long-term} (cf.\ Lemma \ref{lem:longTermPoly}).
We then unravel $\mdp$, replacing the reward after $j$ steps by $R(s,a,j) = \sum_{i=0}^{n-1} \Delta_j(s,a,i)$.
Each unravelling step is cheap, and for reasonably spaced discount factors, we are guaranteed to stop unravelling after a polynomial number of steps (cf.\ Lemma \ref{lem:notTooLong}) into an MDP $\mathcal D$ with a DAG structure.

Analysing $\mathcal D$ can then be done layer by layer: for states in the final layer $\kappa$, the expected payoff is $E(s,\kappa)= 0$ (as the end of the DAG has been reached).
Once a layer $j+1$ is evaluated, we can locally evaluate 
\[
E(s,j) {=} \max_{a \in A(s)} \{R(s,a,j) {+} \sum_{s'\in S}T(s,j,a)(s',j{+}1)E(s',j)\}.
\]
This can be done in time polynomial in $\mathcal D$ (and in $\mdp$) when discount factors are reasonably spaced (cf.\ Lemma~\ref{lem:notTooLong}).

For hardness, we reduce from reachability games~\cite{DBLP:books/daglib/0095988} 
on alternating graphs.
In an alternating graph, the reachability player controls the $\OOR$ nodes, while the safety player controls the $\AAND$ nodes.
We translate the game nodes to states, translating $\OOR$ nodes to states with one actions, so that $T$ picks among the successors of that node uniformly at random.
We translate $\AAND$ nodes to states with as many actions as it has successors and let the MDP freely pick to which successor state it moves.
For payoffs, we use $R(t,a) = -1$ for all $a \in A(t)$, and $R(s',a) = 0$ otherwise.

If the safety player can win, she can win with a positional strategy in the reachability game; the same strategy provides a social welfare of $0$, which is optimal as no individual payoff is positive.
Vice versa, a strategy $\sigma$ that provides $\sw_\sigma = 0$ in the resulting MDP provides a winning strategy for the safety player in the reachability game.
Finally, if we consider two-player zero-sum games instead, we use the payoffs $R(t,a,0) = -1$, $R(t,a) = -$, and $R(t,a,i) = 0$ otherwise and retain the same argument.
\end{proof}

\section{Experimental Results}
\label{sec:experiment}
We study scalability through three research questions (\textbf{RQ}s): 
\textbf{(RQ1)} the number of states,
\textbf{(RQ2)} the number of principals, and
\textbf{(RQ3)} the spacing of discount factors.
For each question, we construct randomised MDP families 
with structural features designed to expose the relevant performance trends.
The results demonstrate the empirical tractability of our method across these key dimensions of complexity.

All experiments were run on a Linux workstation with an
Intel\textsuperscript{\textregistered} Core\textsuperscript{\texttrademark}
i7-4790 CPU at 3.60\,GHz, 8 logical cores, and 8.2\,GB RAM,
using Python~$\mathtt{3.12.3}$, Linux kernel~$\mathtt{6.14.0{-}24{-}generic}$,
and $\mathtt{glibc~2.39}$.

\begin{figure}[t]
\centering
\caption{Running time grows linearly with states in reasonably sized MDPs with up to $2K$ states (top-left) and (without \Cref{algo:long-term}) in MDPs with a wide variation ($3$ to $10^5$) of states (top-right); linearly with the number of players (bottom-left); and falls as the factor between discount factors grows (bottom-right).}
\label{fig:combined_time}
\vspace{-1em}
\begin{subfigure}{0.48\linewidth}
    \centering
    \includegraphics[width=\linewidth]{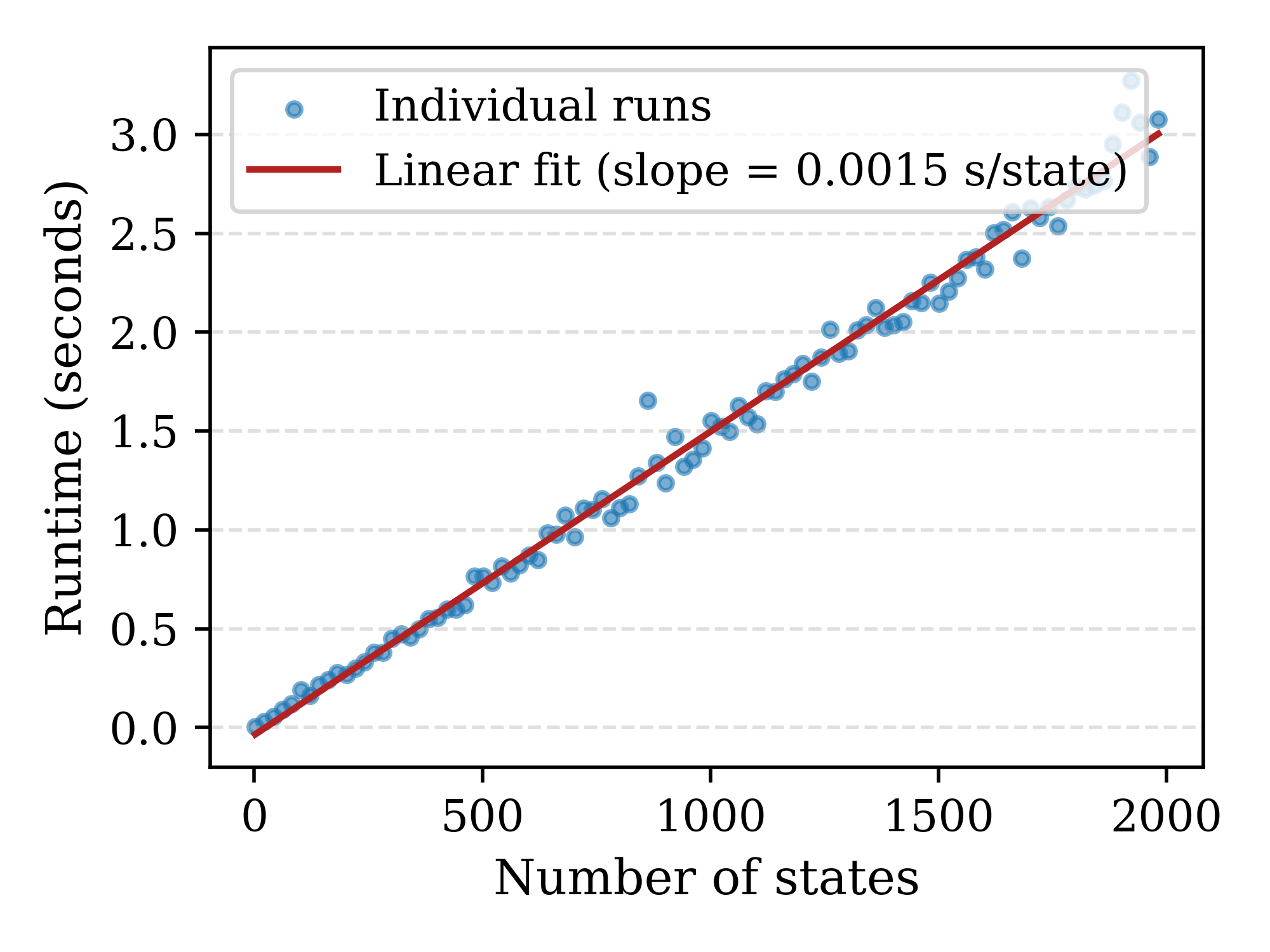}
\end{subfigure}\hfill
\begin{subfigure}{0.48\linewidth}
    \centering
    \includegraphics[width=\linewidth]{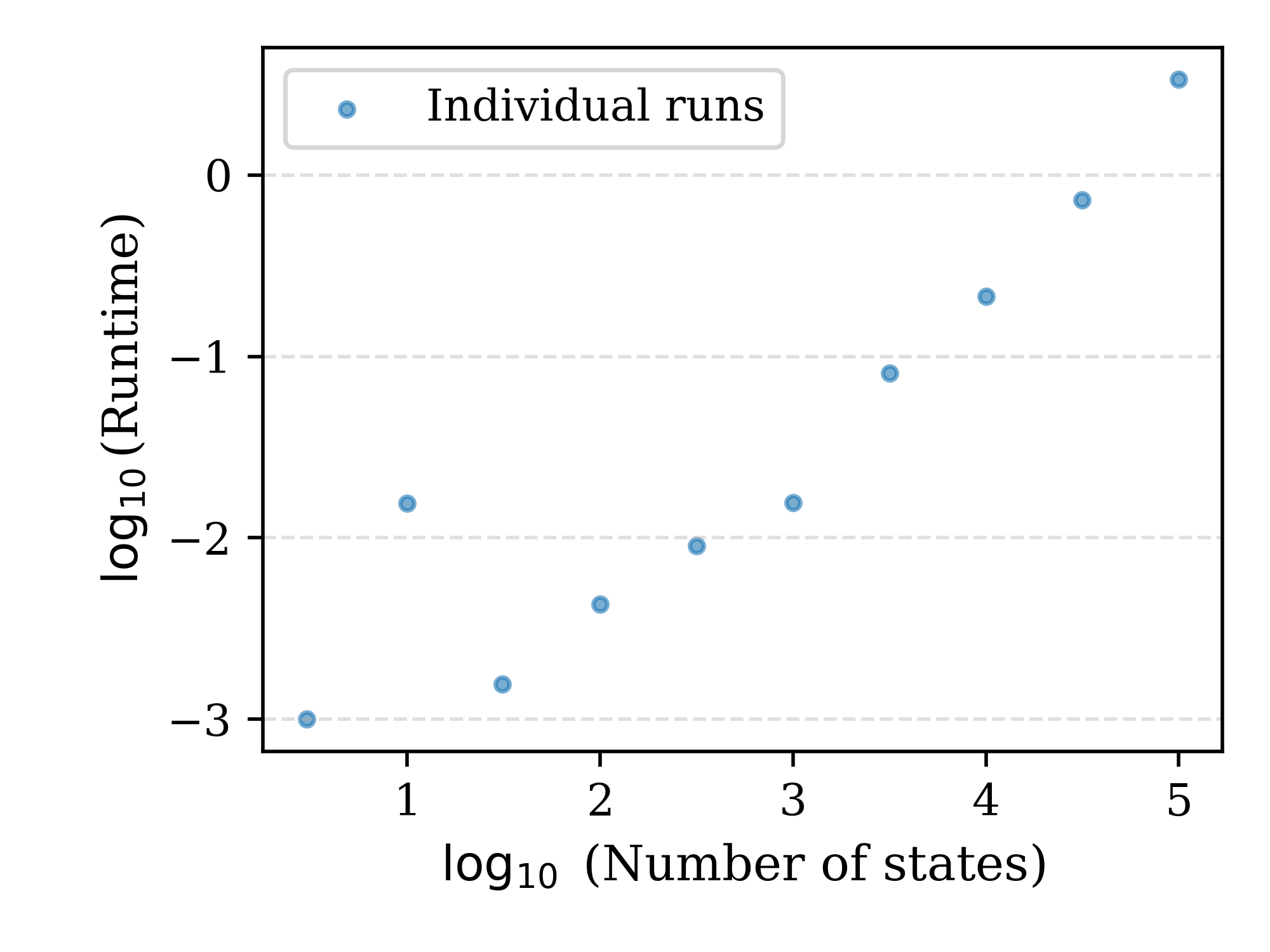}
\end{subfigure}

\begin{subfigure}{0.48\linewidth}
    \centering
    \includegraphics[width=\linewidth]{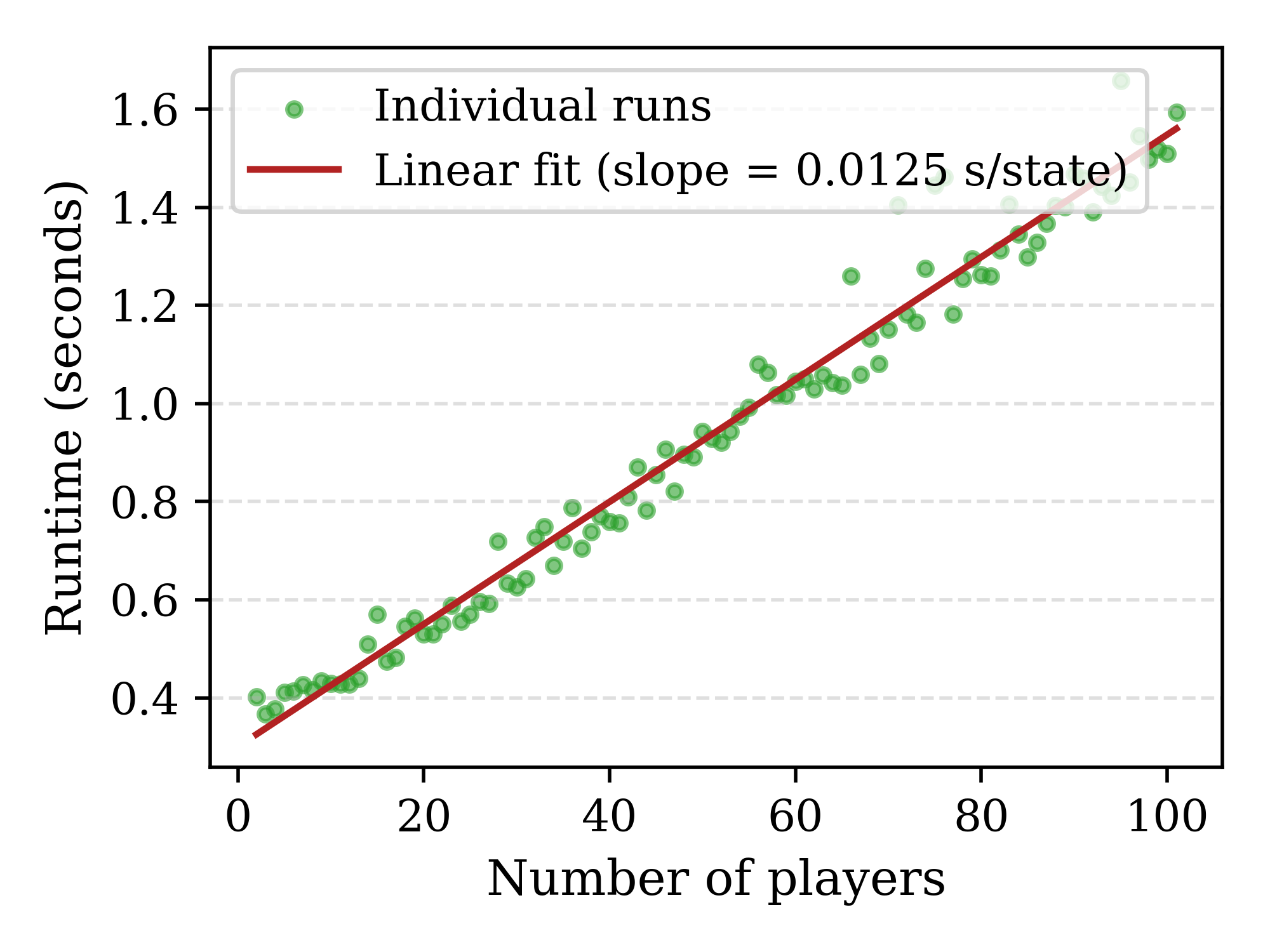}
\end{subfigure}\hfill
\begin{subfigure}{0.48\linewidth}
    \centering
    \includegraphics[width=\linewidth]{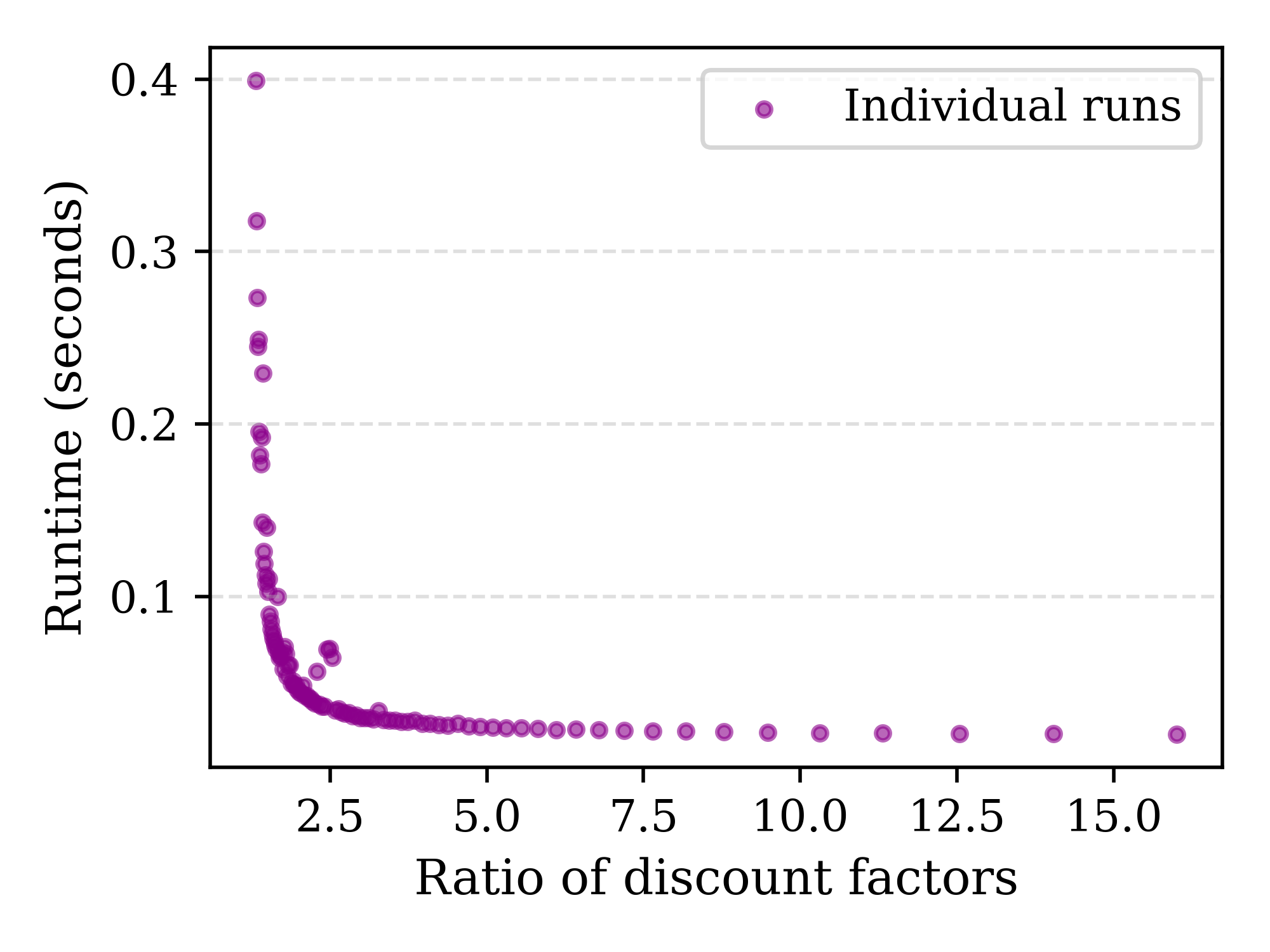}
\end{subfigure}

\end{figure}

\paragraph{RQ1: Scalability with size of MDPs.}
To evaluate scalability with respect to the number of states, we designed two sets of experiments: first, we generated 100 randomized MDP instances, varying the number of states from $2$ to $2000$ in increments of $20$. Each MDP has two principals with fixed discount factors $(0.9, 0.3)$, and two actions per state.
\Cref{fig:combined_time} (top-left) shows that as the number of states in the MDPs increases (from $2$ to $2000$), the time required to calculate the optimal social welfare grows linearly with the size of the MDP, demonstrating scalability in relation to the number of states. 
Second, to assess scalability on large-scale MDPs, we evaluated the algorithm on a smaller set of instances with larger state spaces. We generated MDPs with the number of states ranging from $3$ to $10^5$, using powers of $\sqrt{10}$. Each MDP consisted of 6 principals with fixed discount factors and 2 actions per state. \Cref{fig:combined_time} (top-right) shows the algorithm remains tractable even for large MDPs; only in this figure, we excluded the runtime of \Cref{algo:long-term} because solving the MDPs dominates the cost and becomes bottleneck, see \Cref{experiments}.

\paragraph{RQ2: Scalability with the number of principals.}
To evaluate scalability with respect to the number of principals, we generated $100$ randomized MDP instances, varying the number of principals from $2$ to $101$ in increments of 1 principal per example. Each MDP had fixed number of $30$ states, and exactly two actions per state. The principals’ discount factors formed an arithmetic progression from $0.99$ (most patient) to $0.05$ (least patient) according to \(\lambda_i = 0.99 - (i-1)\tfrac{0.94}{n-1}\). \Cref{fig:combined_time} (bottom-left) shows that the computational time required to compute optimal social welfare grows polynomially with the number of principals of MDP, demonstrating scalability of the algorithm in relation to the number of principals.

\paragraph{RQ3: Scalability with the ratio of discount factors.}
To evaluate scalability with respect to the ratio of discount factors of principals ($\,\lambda_0\,/\,\lambda_1$), we generated $100$ randomized MDP instances, varying the ratio of discount factors from $1.32$ (almost identical patience) up to $16$ (strongly asymmetric patience). Each MDP had two principals, fixed number of 30 states, and exactly two actions per state.
\Cref{fig:combined_time} (bottom-right) shows that the execution time falls steeply as the ratio grows: runs start at $\approx 0.3$s when the two discount factors are nearly equal ($\frac{\lambda_0}{\lambda_1} =1.32$), drop below $0.02$s when $\frac{\lambda_0}{\lambda_1} \approx 4$, and then flatten out around $0.01$s for when $\frac{\lambda_0}{\lambda_1} = 16$.

\section{Conclusion}
\label{sec:concl}
We introduced asymmetrically-discounted MDPs as a natural model for decision-making on behalf of multiple principals with heterogeneous time preferences. We showed that social-welfare-optimal strategies can be computed tractably under reasonable assumptions on discount-factor spacing, and that such optima can be realised by pure, finite-memory strategies.

Our results highlight the computational and strategic complexity introduced by temporal asymmetries, and open new directions for multi-agent planning, algorithmic game theory, and the study of intertemporal incentives.

\section*{Acknowledgements}
This work was supported by the EPSRC through grants EP/X03688X/1 (TRUSTED) and EP/X042596/1 (Games for Good), and in part by the NSF under CAREER Award CCF-2146563. Ashutosh Trivedi is a Royal Society Wolfson Visiting Fellow and gratefully acknowledges the support of the Wolfson Foundation and the Royal Society.

\bibliographystyle{named}
\bibliography{ijcai26}

\newpage
\newpage
\appendix
\onecolumn

\setcounter{secnumdepth}{2} 

\begin{center}
\textbf{\LARGE Appendix}    
\end{center}

The appendix collects supplementary material supporting the main technical and experimental claims of the paper. 
We begin with a visual summary of the main results, then give additional details for the theoretical analysis, including an example illustrating the role of reasonably spaced discount factors and the omitted proof from Section~\ref{sec:pos-hard}. 
We next provide enlarged experimental plots and additional runtime comparisons. 
Finally, we include worked examples that illustrate how the algorithm computes welfare-optimal strategies in concrete asymmetrically-discounted MDPs.

\section{Summary of Results}
\label{app:summary}

\begin{figure}[!htbp]
\scalebox{1}{
\begin{tikzpicture}[scale=1]

\filldraw[
  fill=blue!20,
  draw=black,
  very thick
]
(0,0) ellipse (5 and 3);

\filldraw[
  fill=green!25,
  draw=black,
  thick
]
(0,0) ellipse (3.5 and 2);

\filldraw[
  fill=orange!35,
  draw=black,
  thick
]
(0,0) ellipse (1.8 and 1);

\begin{scope}
  \clip (0,0) ellipse (5 and 3);
  \fill[
    pattern=north east lines,
    pattern color=black
  ] (-6,-4) rectangle (0,4);
\end{scope}

\draw[very thick] (0,-3) -- (0,3);

\node[
  fill=white,
  inner sep=2pt
] (pure) at (1,2.5) {Pure};

\node[
  fill=white,
  inner sep=2pt
] (mixed) at (-1,2.5) {Mixed};

\node[
  fill=white,
  inner sep=2pt
] (counting) at (0,1.5) {Counting};

\node[
  fill=white,
  inner sep=2pt,
  align=center
] (pureCounting) at (2.6,0) {Pure\\Counting};

\node[
  fill=white,
  inner sep=2pt
] (stationary) at (0,.6) {Stationary};

\node[
  fill=white,
  inner sep=2pt
] (positional) at (0.9,0) {Positional};

\node[left,align=center] (optimality) at (-5.5,2) {\textbf{WELFARE}\\\textbf{OPTIMALITY}};

\node[right] (complexity) at (5.5,2) {\textbf{COMPLEXITY}};

\node[left] (no1) at (-5.5,1.2) {No};

\node[left,align=center] (no2) at (-5.5,0.4) {No~\\(Even within\\stationary strategies)};

\node[left] (yes) at (-5.5,-.6) {Yes};

\node[right] (np1) at (5.5,1.2) {NP-hard};

\node[right] (np2) at (5.5,0.4) {NP-hard};

\node[right] (pcomplete) at (5.5,-.4) {P-Complete};

\draw[
  ->,
  thick
]
(stationary.west)
  .. controls +(-1.8,1.2) and +(1.2,0.3) ..
(no1.east);

\draw[
  ->,
  thick
]
(positional.west)
  .. controls +(-1,0) and +(1.2,0.3) ..
(no2.east);

\draw[
  ->,
  thick
]
(pureCounting.west)
  .. controls +(-1.8,-1.8) and +(1.2,-0.3) ..
(yes.east);

\draw[
  ->,
  thick
]
(stationary.east)
  .. controls +(1.8,1.2) and +(-0.7,0.1) ..
(np1.west);

\draw[
  ->,
  thick
]
(positional.north)
  .. controls +(1,1) and +(-0.7,0.5) ..
(np2.west);

\draw[
  ->,
  thick
]
(pureCounting.east)
  .. controls +(.8,-.8) and +(-0.2,-0.3) ..
(pcomplete.west);

\end{tikzpicture}
}
\caption{Summary of our results : welfare optimality and computational hardness for different classes of strategies}
\end{figure}

\section{Additional Details for the Main Theorems}
\label{app:theory}

\subsection{Necessity of Reasonably Spaced Discount Factors}
\label{sec:example-spaced}

Here we give an example illustrating the necessity of~\Cref{assum-spaced}. 
When discount factors are not reasonably spaced, the algorithm remains correct, but the unravelling depth can grow exponentially.
Consider
\[
\lambda_i = \frac{n}{2n-1}
\qquad\text{and}\qquad
\lambda_{i+1} = \frac{n+1}{2n+1}.
\]
The denominator $\ln(\lambda_i/\lambda_{i+1})$ in the estimate for the unravelling depth in~\Cref{eq:kappa-calculate} is approximately
$\frac{1}{(\lambda_i/\lambda_{i+1})-1}$ for large $n$, which equals $2n^2+n-1$.
If $n$ is given in binary, this quantity is exponential in the input size.

The following family of MDPs realises this behaviour.

\begin{figure}[h]
\centering
\scalebox{0.7}{
\begin{tikzpicture}[
    every node/.append style={box state/.append style={fill=safecellcolor}},
    every state/.append style={circle, fill=yellow!10, draw, minimum size=11mm},
    node distance=3cm,
    initial where=left,
    initial text={},
    prob/.style={draw,fill,shape=circle,minimum size=2mm,inner sep=0mm}
]

\node[state, initial] (q0) {$P_0$};
\node[prob, right=1.5cm of q0] (p) {};
\node[state, right=1.5cm of p] (q1) {$S_1$};
\node[state, right=2cm of q1] (q2) {$S_2$};

\path[-] (q0) edge node[above] {$0,0$} (p);

\path[->]
    (p) edge node[below] {$\frac{1}{2}$} (q1)
          edge [bend left=45] node [below] {$\frac{1}{2}$} (q0)
;

\path[->] (q1) edge [loop above] node {$1,0$} ()
           (q1) edge node [above] {$2,2$} (q2)
           ;

\path[->] (q2) edge [loop above] node {$0,2$} ();

\end{tikzpicture}
}
\end{figure} 

State $P_0$ is a probabilistic state that ensures that $S_1$ can be reached after histories of arbitrarily large length.
State $S_1$ is the only state at which a choice is made. It offers two actions.
The first action is a self-loop, giving reward $1$ to the first principal, whose discount factor is
$\lambda_1=\frac{n}{2n-1}$, and reward $0$ to the second principal, whose discount factor is
$\lambda_2=\frac{n+1}{2n+1}$.
The second action moves to the sink state $S_2$, giving reward $2$ to both principals on the transition;
at $S_2$, the rewards are $0$ and $2$ for the first and second principal, respectively.

If the self-loop is taken forever, the first principal receives payoff
$2+\frac{1}{n-1}$, while the second principal receives payoff $0$.
If the strategy moves to $S_2$, the first principal receives payoff $2$, while the second principal receives payoff
$4+\frac{2}{n}$.
Thus, staying in $S_1$ is eventually the better social choice, but initially the second principal's gain from moving to $S_2$ outweighs the first principal's gain from staying in $S_1$.
The initial imbalance is by a factor of
\[
40n - 2 - \frac{2}{n},
\]
whereas the relative discount factor is
\[
\frac{\lambda_2}{\lambda_1}
= 1 - \frac{1}{2n^2+n}.
\]
Since this ratio is extremely close to $1$, it takes exponentially many steps, in the binary encoding of $n$, before the initial imbalance is discounted below $1$.

For example, for $n=10$ we obtain $\kappa=762$, for $n=100$ we obtain
$\kappa=120324$, and for $n=1000$ the unravelling depth grows to
$\kappa=165953878$.
This shows that without the reasonable-spacing assumption, the algorithm may require exponentially many unravelling steps, even though its correctness is unaffected.

\subsection{Omitted Proof from Section~\ref{sec:pos-hard}}
\label{sec:proof-full}

Here we give the complete proof of Theorem~\ref{pos-NP}.
Let $\phi$ be a 3-SAT instance over variables
$X = \set{x_1, x_2, \ldots, x_n}$, with clauses
$C = \set{C_1, C_2, \ldots, C_m}$.

We construct an asymmetrically-discounted MDP $\mdp_\phi$ with two principals.
Let
\[
L = X \cup \set{\neg x : x \in X}
\]
be the set of literals over $X$, and let
\[
C' = \set{C'_1, C'_2, \ldots, C'_m}
\]
be a primed copy of the clause set.

We construct an asymmetrically-discounted MDP
$\mdp_\phi = (S, A, T, R, s_0)$ with:
\begin{itemize}
    \item the set of states
    \[
    S = \set{s_0, \top, \bot} \cup C \cup C^{\prime}
        \cup \set{V_x : x \in X} \cup L;
    \]

    \item the set of actions
    \[
    A = \set{\mathtt{down}} \cup \set{a_{\ell} : \ell \in L}
        \cup \set{a_\top, a_\bot} \cup \set{\mathtt{stay}};
    \]

    \item the transition function $T$, defined as follows:
    \begin{itemize}
        \item $T(s_0, \mathtt{down})(s') = \frac{1}{n+m}$ for every
        $s' \in \set{V_x : x \in X} \cup C$, and
        $T(s_0, \mathtt{down})(s') = 0$ for all other states $s'$;

        \item $T(C_i, \mathtt{down})(C'_i) = 1$ for each clause
        $C_i$, where $i \in [m]$;

        \item $T(C'_i, a_{\ell})(\ell) = 1$ for each literal
        $\ell$ appearing in clause $C_i$;

        \item $T(V_x, a_x)(x) = 1$ and
        $T(V_x, a_{\neg x})(\neg x) = 1$ for each variable $x \in X$;

        \item $T(\ell, a_\top)(\top) = 1$ and
        $T(\ell, a_\bot)(\bot) = 1$ for each literal $\ell \in L$;

        \item $T(\top, \mathtt{stay})(\top) = 1$ and
        $T(\bot, \mathtt{stay})(\bot) = 1$.
    \end{itemize}

    \item two principals, Principal~$0$ and Principal~$1$, with discount
    factors $\lambda_0 = 0.54$ and $\lambda_1 = 0.4$, respectively;

    \item identical rewards for both principals, defined as follows:
    \begin{itemize}
        \item $R(\ell, a_\top) = -1$ and $R(\ell, a_\bot) = +1$
        for each literal $\ell \in L$;

        \item $R(\top, \mathtt{stay}) = +1$ and
        $R(\bot, \mathtt{stay}) = -1$;

        \item all other rewards are $0$.
    \end{itemize}
\end{itemize}

This reduction establishes NP-hardness, even with just two principals and identical or zero-sum rewards.

\begin{proof}
The MDP starts at state $s_0$. 
It then stochastically goes through one of the variable nodes or one of the clause nodes.
If the MDP goes to variable nodes, it will take 2 steps to reach a literal and 3 steps to reach either of the sinks ($\top$ or $\bot$); while if it goes to clause nodes, it will take 3 steps to get to a literal and 4 steps to get to either of the sinks ($\top$ or $\bot$). 

On the shorter paths, Principal $0$ receives a total payoff of 
\[
0 + 0 \cdot  \lambda_0 + (-1) \cdot {\lambda_0}^2+\frac{(+1)\cdot {\lambda_0^3}}{1-\lambda_0} = \frac{729}{14375}  \approx 0.0507
\]
when going to $\top$, and $-\frac{729}{14375}   \approx -0.0507$ when going to $\bot$, while Principal $1$ receives an overall payoff of $ -\frac{4}{75} = -0.05\overline{3}$ when going to $\top$, and $\frac{4}{75} = 0.05\overline{3}$ when going to $\bot$.
Short paths that reach $\top$ therefore contribute approximately $ -\frac{113}{43125} \approx -0.003 $ to the social welfare, while going to $\bot$ on short paths contributes $+\frac{113}{43125} \approx +0.003 = c_s$.

Thus, on short paths, we would ideally reach $\bot$.
for long paths, it is the other way round: long paths reaching $\top$ contribute $\frac{13049}{2156250} \approx +0.006 = c_\ell$ to the social welfare, while long paths reaching $\bot$ contribute $ -\frac{13049}{2156250} \approx -0.006$.

Thus, a strategy $\sigma$ can achieve a social welfare of $\frac{m \cdot c_\ell + n \cdot c_s}{m+n}$ if, and only if, all reachable short paths go to $\bot$, while all reachable long paths go to $\top$; otherwise the social welfare of a strategy is strictly smaller.

Such a strategy can be taken from a solution to the 3SAT problem (go to $\top$ from true and to bot from false literals and go true literals on the long and false literals on the short paths).
Likewise, any strategy that produces this social welfare has at most one of the literals for every variable true, and at least one literal for each clause.
Thus, when selecting as true the literals that turn to $\top$ then provides a solution to the SAT problem.
This closes the proof for the case of two principals with the same payoff. For two-principal zero-sum games we use the same reductions, but different discount factors and payoffs.
\end{proof}

\section{Additional Experimental Results}
\label{app:experiments}

\subsection{Enlarged Runtime Plots}
\label{experiments}
Figure~\ref{fig:combined_time_large} provides enlarged versions of the runtime plots from Section~\ref{sec:experiment}. 
These plots show how the running time varies with the number of states, the number of principals, and the spacing of discount factors.

\begin{figure*}[t!]
\centering

\begin{subfigure}{0.48\textwidth}
    \centering
    \includegraphics[width=0.93\linewidth]{Figures/time_num_states_new.png}
    \caption{States growing linearly}
\end{subfigure}
\hfill
\begin{subfigure}{0.48\textwidth}
    \centering
    \includegraphics[width=0.93\linewidth]{Figures/time_num_states_large3.png}
    \caption{States scaled from $3$ to $100{,}000$}
\end{subfigure}

\vspace{0.6em}

\begin{subfigure}{0.48\textwidth}
    \centering
    \includegraphics[width=0.93\linewidth]{Figures/time_num_players_new.png}
    \caption{Principals growing linearly}
\end{subfigure}
\hfill
\begin{subfigure}{0.48\textwidth}
    \centering
    \includegraphics[width=0.93\linewidth]{Figures/time_ratio_lam_new.png}
    \caption{Ratio of discount factors}
\end{subfigure}

\caption{Computational time under varying model parameters.}
\label{fig:combined_time_large}
\end{figure*}

\subsection{Runtime With and Without MDP Solving}
\label{app:runtime-mdp-solving}
Figure~\ref{fig:large-scale-nonLog} reports the relationship between the number of states and runtime, corresponding to Figure~\ref{fig:combined_time_large}~(top-right). 
The left plot shows the total runtime, including both MDP solving via Algorithm~\ref{algo:long-term} and the welfare-optimisation algorithm. 
The right plot excludes the MDP-solving phase and reports only the runtime of our algorithm. 
MDP solving is performed using value iteration, which dominates the runtime for large instances due to slow convergence in practice.

\begin{figure*}[t!]
\centering
\begin{subfigure}{0.48\textwidth}
    \centering
    \includegraphics[width=\linewidth]{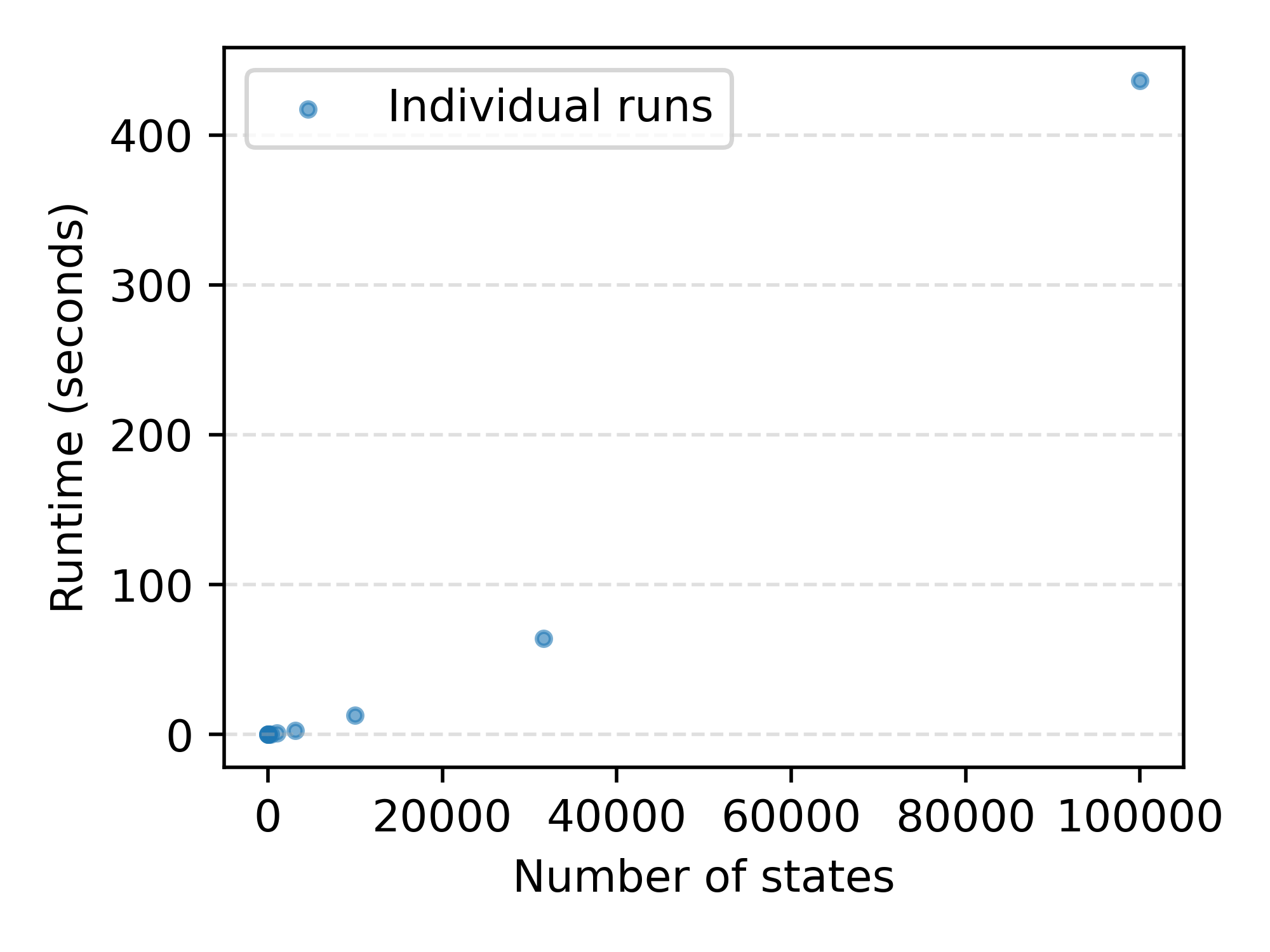}
\end{subfigure}\hfill
\begin{subfigure}{0.48\textwidth}
    \centering
    \includegraphics[width=\linewidth]{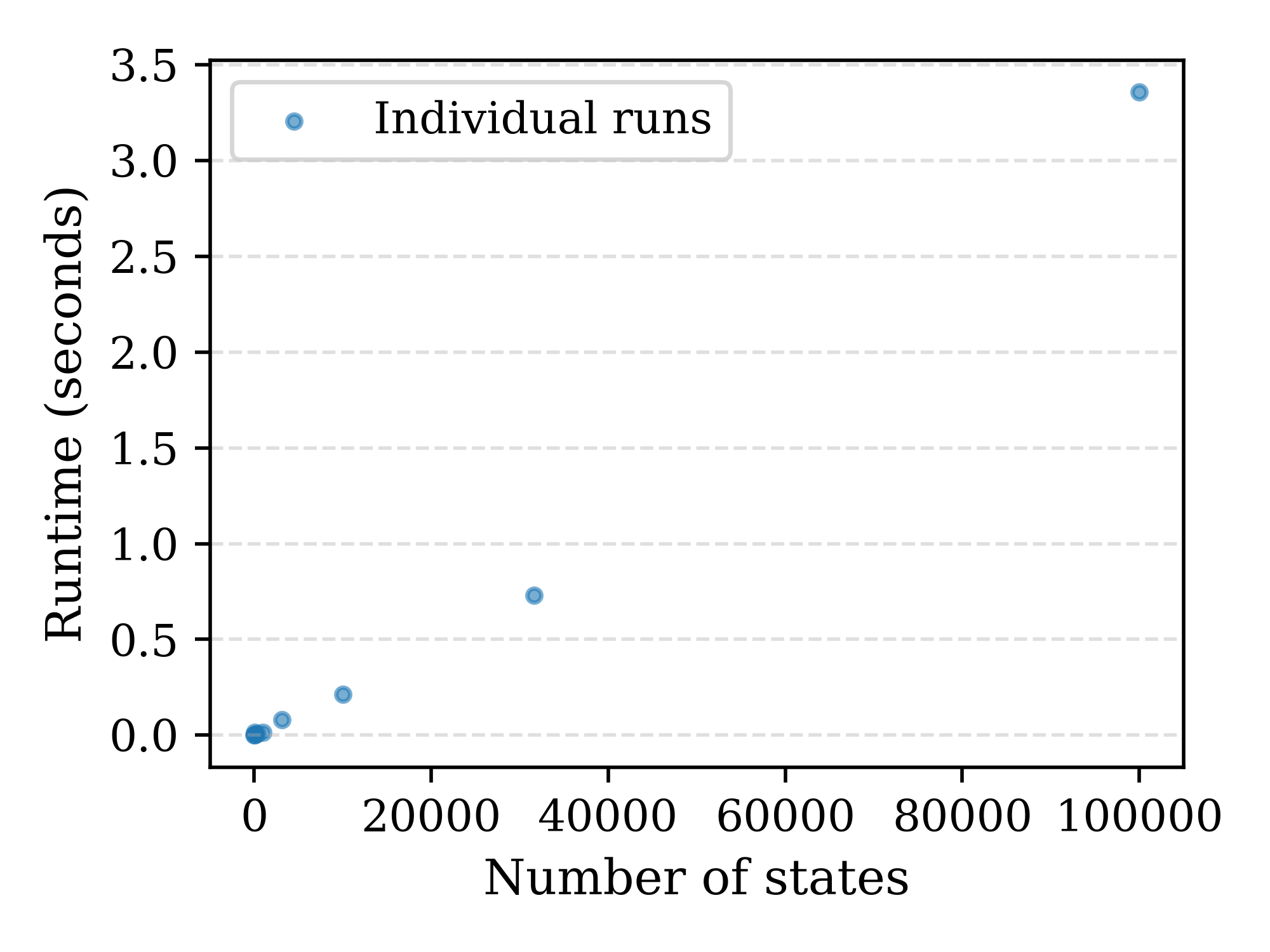}
\end{subfigure}
\caption{Runtime for evaluating social welfare with respect to the number of states; with (left) and without (right) considering MDP-solving.}
\label{fig:large-scale-nonLog}
\end{figure*}

\section{Worked Examples}

Here we provide few worked examples. We provide more details on games from main paper (Figure \ref{fig:exp-same-payoff} below in Example 1.  
We also discuss few new examples.
These examples are worked out with our implementation of the algorithm presented in Section~\ref{sec:algo}.

\begin{figure}[t!]
    \centering
    \scalebox{0.8}{
    \begin{tikzpicture}[
      every state/.append style={fill=safecellcolor},
      every node/.append style={box state/.append style={fill=safecellcolor}},
      initial where=left,
      initial text={}
    ]
      \node[state, initial] (s0) at (0,0) {$s_0$};
      \node[state] (s1) [right=3cm of s0] {$s_1$};
      \path[->]
        (s0) edge [] node[above] {$b, -1$}(s1)
             edge [loop above] node[above] {$a, 3$} (s0)
        (s1) edge [loop above] node[above] {$b, 6$} (s1) ;
    \end{tikzpicture}}
      \caption{An asymmetrically-discounted MDP with two principals with discount factors: $\lambda_0=\frac{2}{3}, \lambda_1=\frac{1}{3}$.}
    \label{fig:paper}
\end{figure}

\paragraph{Example 1.} Consider the MDP shown in the introduction (Figure~\ref{fig:exp-same-payoff}). To achieve the optimal social welfare, the MDP initially deviates from the long-term strategy computed by Algorithm~\ref{algo:long-term} for two steps.

\begin{enumerate}
    \item \textbf{Long-term strategy (Algorithm~\ref{algo:long-term}).} The long-term strategy derived from the perspective of the most patient principal (Principal~0), refined by lexicographic tie-breaking, is:
    \[
    \pi_{\infty} = \{s_0 \mapsto b,\; s_1 \mapsto b\}.
    \]

    \item \textbf{Long-term values.} The value functions computed under $\pi_{\infty}$ are:
    \[
    V_0(s_0) = 11 \quad \text{and} \quad V_1(s_0) = 2.
    \]
    The baseline long-term component of social welfare is therefore $V_0(s_0) + V_1(s_0) = 13$.

    \item \textbf{Deviation phase.} For action $a$ at state $s_0$, the payoff differences $\Delta_j(s_0, a, i)$ are strictly positive for $j < \kappa = 2$. Thus, the optimal strategy initially deviates from $\pi_{\infty}$ for two steps to accumulate additional welfare before switching permanently to $\pi_{\infty}$.

    \item \textbf{Welfare-optimal strategy.} The resulting pure, finite-memory strategy that maximises social welfare is:
    \[
    \sigma = \langle s_0 \mapsto a,\; s_0 \mapsto a,\; s_0 \mapsto b,\; (s_1 \mapsto b)^\omega\rangle.
    \]
    This strategy can be implemented using a counter of depth $\kappa = 2$ followed by the stationary long-term strategy $\pi_{\infty}$.

    \item \textbf{Total social welfare.} The total social welfare under strategy $\sigma$ is the sum of the long-term values and the gain from deviation:
    \[
    \sw_{\sigma}(s_0) = V_0(s_0) + V_1(s_0) + \sum_{j=0}^{\kappa-1} \mathbb{E} \sum_{i=0}^{n-1} \Delta_j(s_j, a_j, i) \approx 11 + 2 + 1.2 = 14.2.
    \]
\end{enumerate}

\begin{figure}[t!]
    \centering
\begin{tikzpicture}[
    every state/.append style={fill=safecellcolor},
    every node/.append style={box state/.append style={fill=safecellcolor}},
    initial text={}
  ]
  \node[state, initial] (s0) at (0,0) {$s_0$};
  \node[prob state] (p0a) [right=2cm of s0, yshift=2cm] {};
  \node[prob state] (p0b) [right=2cm of s0, yshift=-2cm] {};

  \node[state] (s2) [right=4cm of s0, yshift=3cm] {$s_2$};
  \node[state] (s1) [right=4cm of s0, yshift=1cm] {$s_1$};
  \node[state] (s3) [right=4cm of s0, yshift=-2cm] {$s_3$};

  \node[prob state] (p1c) [right=2cm of s1, yshift=0cm] {};
  \node[prob state] (p1d) [right=-2cm of s1, yshift=-1cm] {};


  \path[-] (s0) edge node[left] {$a$, 2.75} (p0a);
  \path[-] (s0) edge node[left] {$b$} (p0b);

  \path[->]
    (p0a) edge node[below] {$0.5$} (s1)
          edge node[above] {$0.5$} (s2)
    (p0b) edge[bend left=45] node[left] {$0.6$} (s0)
          edge node[below] {$0.4$} (s3);

  \path[-] (s1) edge node[above] {$c$, 3} (p1c);
  \path[-] (s1) edge node[below] {~~$d$, -0.5} (p1d);

  \path[->]
    (p1c) edge[bend left=45] node[below] {$0.3$} (s1)
          edge node[right] {$0.4$} (s2)
          edge[bend left=45] node[left] {$0.3$} (s3)
          
    (p1d) edge node[above] {$0.5$} (s0)
          edge node[right] {$0.5$} (s3);

  \path[->]
    (s2) edge[loop above] node[above] {$e$, 1} (s2)
    (s3) edge[loop below] node[below] {$f$, 2.5} (s3);
\end{tikzpicture}

    \caption{An asymmetrically-discounted MDP with two Principals with discount factors: $\lambda_0=0.9, \lambda_1=0.3$}
    \label{fig:mdplast}
\end{figure}
\paragraph{Example 2.} Consider the MDP shown in Figure~\ref{fig:mdplast}. For every state-action pair discarded by Algorithm~\ref{algo:long-term}, the gain for Principal~1 does not compensate for the loss incurred by Principal~0; that is, the loss for Principal~0 outweighs the gain for Principal~1. Since the discount factors are reasonably spaced (with $\lambda_0 > \lambda_1$), this imbalance only worsens over time, as future gains are increasingly down-weighted. Consequently, $\kappa = 1$.

\begin{enumerate}
    \item \textbf{Long-term strategy (Algorithm~\ref{algo:long-term}).} The long-term strategy computed by Algorithm~\ref{algo:long-term} is:
    \[
    \pi_{\infty} = \{s_0 \mapsto b,\; s_1 \mapsto d,\; s_2 \mapsto e,\; s_3 \mapsto f\}.
    \]

    \item \textbf{Long-term values.} From the initial state $s_0$, the value functions are:
    \[
    V_0(s_0) \approx 21.83 \quad \text{and} \quad V_1(s_0) \approx 1.79.
    \]

    \item \textbf{Deviation phase.} In this MDP, $\kappa = 1$; all potential deviations are immediately suboptimal. Thus, the welfare-optimal strategy does not deviate from $\pi_{\infty}$.

    \item \textbf{Welfare-optimal strategy.} The strategy that maximises social welfare is:
    \[
    \pi_{\infty} = \{s_0 \mapsto b,\; s_1 \mapsto d,\; s_2 \mapsto e,\; s_3 \mapsto f\}.
    \]

    \item \textbf{Total social welfare.} Since no deviation occurs, the total social welfare is:
    \[
    \sw_{\sigma}(s_0) = V_0(s_0) + V_1(s_0) \approx 23.62.
    \]
\end{enumerate}

\begin{figure}[t!]
    \centering
    \begin{tikzpicture}[
      every state/.append style={fill=safecellcolor},
      every node/.append style={box state/.append style={fill=safecellcolor}},
      initial where=left,
      initial text={}
    ]
      \node[state, initial] (s0) at (0,0) {$s_0$};
      \node[state] (s1) [right=3cm of s0] {$s_1$};
      \node[state] (s2) [right=3cm of s1] {$s_2$};
      \node[state] (s3) [below=2cm of s2] {$s_3$};
      \node[state] (s4) [below=2cm of s0] {$s_4$};
      \node[state] (s5) [below=2cm of s1] {$s_5$};

      \path[->]
        (s0) edge [] node[left] {$a, 1$} (s4)
             edge [] node[above] {$b, 0$} (s1)
        (s1) edge [loop above] node[above] {$c, 10$} (s1)
             edge [] node[above] {$d, 0$} (s2)
        (s2) edge [loop right] node[right] {$e, 7$} (s2)
             edge [] node[right] {$f, 5$} (s3)
        (s3) edge [loop right] node[right] {$g, 11$} (s3)
        (s4) edge [] node[below] {$h, 0$} (s5)
             edge [] node[right] {$j, -2$} (s1)
        (s5) edge [] node[below] {$k, 0$} (s3);
    \end{tikzpicture}
    \caption{An asymmetrically-discounted MDP with two principals with discount factors: $\lambda_0=0.99, \lambda_1=0.01$}
\label{rectangle-kappa0}
\end{figure}
\paragraph{Example 3.} Consider the MDP shown in Figure~\ref{rectangle-kappa0}. In this example, the MDP always follows the long-term strategy computed by Algorithm~\ref{algo:long-term}. This means that deviating to any discarded state-action pair at the initial state yields a gain for Principal~1 that does not compensate for the corresponding loss incurred by Principal~0.

\begin{enumerate}
    \item \textbf{Long-term strategy (Algorithm~\ref{algo:long-term}).} The long-term strategy computed by Algorithm~\ref{algo:long-term} is:
    \[
    \pi_{\infty} = \{s_0 \mapsto b,\; s_1 \mapsto d,\; s_2 \mapsto f,\; s_3 \mapsto g,\; s_4 \mapsto h,\; s_5 \mapsto k\}.
    \]

    \item \textbf{Long-term values.} The value functions under $\pi_{\infty}$ from the initial state $s_0$ are:
    \[
    V_0(s_0) = 1072.2294 \quad \text{and} \quad V_1(s_0) = 1.0000\overline{1}.
    \]
    Hence, the baseline component of social welfare is already very close to the optimum.

    \item \textbf{Deviation phase.} In this case, $\kappa = 1$; every removed edge reachable from $s_0$ has a non-positive payoff difference at the moment it can first be taken. Therefore, the optimal finite-memory strategy does not deviate from $\pi_{\infty}$, as no gain in social welfare can be achieved through deviation.

    \item \textbf{Welfare-optimal strategy.} The strategy that maximises social welfare is simply the long-term strategy:
    \[
    \sigma = \langle s_0 \mapsto b,\; s_1 \mapsto d,\; s_2 \mapsto f,\; (s_3 \mapsto g)^\omega \rangle.
    \]
    This is a pure stationary strategy induced directly by $\pi_{\infty}$.

    \item \textbf{Total social welfare.} Since no deviation occurs, the total social welfare is simply the sum of the long-term values:
    \[
    \sw_{\sigma}(s_0) = V_0(s_0) + V_1(s_0) \approx 1073.23.
    \]
\end{enumerate}

\begin{figure}[t!]
    \centering
    {
    \begin{tikzpicture}[
      every state/.append style={fill=safecellcolor},
      every node/.append style={box state/.append style={fill=safecellcolor}},
      initial where=left,
      initial text={}
    ]
      \node[state, initial] (s0) at (0,0) {$s_0$};
      \node[state] (s1) [right=3cm of s0] {$s_1$};
      \node[state] (s2) [right=3cm of s1] {$s_2$};
      \node[state] (s3) [below=2cm of s2] {$s_3$};
      \node[state] (s4) [below=2cm of s0] {$s_4$};
      \node[state] (s5) [below left=2cm and 1cm of s1] {$s_5$};
      \node[state] (s6) [below left=2cm and 2cm of s2] {$s_6$};

      \path[->]
        (s0) edge [] node[left] {$a, 10.5$} (s4)
             edge [] node[above] {$b, 0$} (s1)
        (s1) edge [loop above] node[above] {$c, 10$} (s1)
             edge [] node[above] {$d, 0$} (s2)
        (s2) edge [loop right] node[right] {$e, 7$} (s2)
             edge [] node[right] {$f, 5$} (s3)
        (s3) edge [loop right] node[right] {$g, 20$} (s3)
        (s4) edge [] node[below] {$h, 0$} (s5)
             edge [] node[above] {$j, -2$} (s1)
        (s5) edge [] node[below] {$k, 0$} (s6)
        (s6) edge [] node[below] {$m, 0$} (s3);
        
    \end{tikzpicture}
    }
    \caption{An asymmetrically-discounted MDP with two principals with discount factors: $\lambda_0=0.88, \lambda_1=0.15$}
    \label{ex-copy}
\end{figure}
\paragraph{Example 4.} Consider the MDP in Figure~\ref{ex-copy}. This is almost structurally similar to the one in Figure~\ref{rectangle-kappa0} (with addition of state $s_6$), and also differs in both the discount factors and reward values for certain actions. The key distinction is that, while in Figure~\ref{rectangle-kappa0} the welfare-optimal strategy coincides with the long-term strategy, in Figure~\ref{ex-copy} the MDP initially deviates. Specifically, at state~$s_0$, the optimal strategy chooses action~$a$ instead of the long-term choice~$b$. After this single-step deviation, from state~$s_4$ onward, the MDP follows the long-term strategy, starting with action~$j$.

\begin{enumerate}
    \item \textbf{Long-term strategy (Algorithm~\ref{algo:long-term}).} The long-term strategy computed by Algorithm~\ref{algo:long-term} is:
    \[
    \pi_{\infty} = \{s_0 \mapsto b,\; s_1 \mapsto d,\; s_2 \mapsto f,\; s_3 \mapsto g,\; s_4 \mapsto j,\; s_5 \mapsto k,\; s_6 \mapsto m\}.
    \]

    \item \textbf{Long-term values.} From the initial state $s_0$, the value functions are:
    \[
    V_0(s_0) \approx 117.45 \quad \text{and} \quad V_1(s_0) \approx 0.19.
    \]

    \item \textbf{Deviation phase.} In this case, $\kappa = 2$; action~$a$ at $s_0$ has a strictly positive payoff difference. The welfare-optimal strategy deviates at step~$0$ by taking action~$a$, then selects action~$j$ at $s_4$, and follows $\pi_{\infty}$ from that point forward.

    \item \textbf{Welfare-optimal strategy.} The strategy that maximises social welfare is:
    \[
    \sigma = \langle s_0 \mapsto a,\; s_4 \mapsto j,\; s_1 \mapsto d,\; s_2 \mapsto f,\; (s_3 \mapsto g)^\omega \rangle.
    \]

    \item \textbf{Total social welfare.} The total social welfare under $\sigma$ is the sum of long-term values and the gain from deviation:
    \[
    \sw_{\sigma}(s_0) \approx 117.45 + 0.19 + 4.69 = 122.33.
    \]
\end{enumerate}

Examples~2 and~3 demonstrate cases where Algorithm~\ref{algo:long-term}
already yields a welfare-optimal strategy, and no finite unrolling of the MDP
into a DAG is necessary. In contrast, Examples~1 and~4 show that limited
early deviations from the long-term strategy can improve social welfare and
must be explicitly incorporated.


\end{document}